\documentclass[preprint,12pt]{elsarticle}
\usepackage{amssymb}
\usepackage{amsmath}
\usepackage[hidelinks]{hyperref}

\DeclareSymbolFont{frenchscript}{OMS}{ztmcm}{m}{n}
\DeclareMathSymbol{\Pow}{\mathord}{frenchscript}{80}  
\usepackage{xcolor}
\usepackage{iftex}
\usepackage{scalerel}
\usepackage{mathpartir}
\ifpdf
  \usepackage{underscore}         
  \usepackage[T1]{fontenc}        
\else
  \usepackage{breakurl}           
\fi
\usepackage{xspace}
\usepackage{amsthm}
\usepackage{soul}
\usepackage{stmaryrd}
\usepackage{MnSymbol}
\usepackage{amsfonts}
\usepackage{enumitem}

\usepackage{tikz}
{}
{}
{}
\newcommand{\Next}{\operatorname{%
  \tikz[baseline]{
    \draw[line width=.12ex]
      (0,.6ex) circle (.8ex);
  }}}{}
  \newcommand{\Eventually}{\operatorname{%
  \tikz[baseline]{
    \draw[line width=.12ex,line join=round]
      (0ex,.6ex) -- (.95ex,1.55ex) -- (1.9ex,.6ex) -- (.95ex,-.35ex) -- cycle;
  }}}{}
  \newcommand{\Always}{\operatorname{%
  \tikz[baseline]{
    \draw[draw=white] (-0.2ex,0ex) -- (1.7ex,0ex);
    \draw[line width=.12ex,line join=round]
      (0ex,-.2ex) -- (0ex,1.3ex) -- (1.5ex,1.3ex) -- (1.5ex,.-.2ex) -- cycle;
  }}}{}
   \DeclareMathOperator*{\bigdunion}{%
       \scalerel*{\bigcup\mathchoice%
           {\hspace*{-0.75em}\scalebox{0.7}{$\lightning$}\hspace*{0.35em}}%
           {\hspace*{-0.61em}\scalebox{0.55}{\raisebox{0.2em}{$\lightning$}}\hspace*{0.24em}}%
           {ERROR}%
           {ERROR}}{\bigcup}}
  \newcommand{\Until}{\mathbin{\,\mathcal{U}}}{}
  \newcommand{\Release}{\mathbin{\,\mathcal{R}}}{}
  
  \newcommand{\dunion}{\mathbin{%
    \cup\hspace*{-0.4em}\scalebox{0.4}{\raisebox{0.36em}{$\lightning$}}\hspace*{0.35em}}}{}

  \newcommand{\LT}{linear-time\xspace}
  \newcommand{\LTL}{Linear-time Temporal Logic\xspace}
\newcommand{\plat}[1]{\raisebox{0pt}[0pt][0pt]{#1}}     

\newcommand{\ltlthree}[0]{LTL$_3$\xspace}
\newcommand{\prepend}{\triangleright}
\newcommand{\dclose}{\mathop{\lightning}}
\newcommand{\prefixes}{\mathop{\downarrow}}
\newcommand{\extensions}{\mathop{\uparrow}}
\newcommand{\ftraces}{\Sigma^\ast}
\newcommand{\itraces}{\Sigma^\omega}
\newcommand{\traces}{\Sigma^\infty}

\newcommand{\drop}[2]{#1_{\rvert #2}}
\newcommand{\true}{\textup{\texttt{T}}}
\newcommand{\false}{\textup{\texttt{F}}}
\newcommand{\unknown}{\texttt{?}}
\newcommand{\semantics}[1]{\llbracket #1 \rrbracket}

\newcommand{\dbigunion}{\bigdunion}
\newcommand{\compl}[1]{#1^{\complement}}
\newcommand{\guar}[1]{\underline{#1}}
\newcommand{\safety}[1]{\overline{#1}}

\newcommand{\SemOp}[1]{%
  \tikz[baseline]{
    \draw[draw=white] (-0.2ex,0ex) -- (1.7ex,0ex);
    \draw[line width=.12ex,line join=round, rounded corners=0.2em]
      (0ex,-.2ex) -- (0ex,1.3ex) -- (1.5ex,1.3ex) -- (1.5ex,.-.2ex) -- cycle;
    \node at (0.75ex,0.55ex) {\tiny $#1$};
  }}

\newcommand{\Nextaif}{\operatorname{\SemOp{\Next}}}{}
\newcommand{\Negaif}{\operatorname{\SemOp{\neg}}}{}
\newcommand{\Trueaif}{\SemOp{\top}}
\newcommand{\Propaif}[1]{\left\ullcorner #1\right\ulrcorner}
\newcommand{\Untilaif}{\mathbin{\SemOp{\Until}}}
\newcommand{\Conjaif}{\mathbin{\SemOp{\land}}}
\newcommand{\Disjaif}{\mathbin{\SemOp{\lor}}}
\newcommand{\Propaifthree}[1]{\Propaif{#1}_3}

\newcommand{\weg}[1]{} 

\newtheorem{thm}{Theorem}[section]
\newtheorem{lem}{Lemma}[section]

\theoremstyle{definition}
\newtheorem{definition}{Definition}[section]

\newtheorem{counterexample}{Counterexample}[section]

\journal{Information and Computation}

\begin{document}

\begin{frontmatter}



\title{The Infinite, in Finite Time} 
%

\author[lab1]{Rayhana Amjad}
\affiliation[lab1]{organization={University of Edinburgh},
             city={Edinburgh},
             country={United Kingdom}}
\author[lab1,lab3]{Rob van Glabbeek\texorpdfstring{$^{*,}$}{}}
\author[lab2,lab1]{Liam O'Connor}
\affiliation[lab2]{organization={Australian National University},
             city={Canberra},
             country={Australia}}
\affiliation[lab3]{organization={University of New South Wales},
             city={Sydney},
             country={Australia}}
	
\begin{abstract}
Linear-time temporal properties, such as those described by \LTL, are typically modelled as sets of infinite traces. Yet, in a run-time verification context, such as when testing or monitoring a system, only a finite prefix of the system's behaviour can be observed. For some properties, these finite prefixes may be \emph{definitive}---a yes or no answer can be given without further observation. By enriching the semantics of LTL with these
	definitive prefixes, we give a proper inductive accounting of the semantics of \ltlthree, a multi-valued variant of \LTL for run-time verification applications. The semantic descriptions of \ltlthree in previous work are given only in terms of their relationship to conventional LTL\@. We show that the semantics of LTL and of \ltlthree are isomorphic. In addition, we formalise the formula progression evaluation technique, popularly used in runtime verification contexts, and show its soundness and completeness up to finite traces with respect to our semantics. 
Then, we turn to \LT properties more generally: using our theory of definitive prefixes, we re-prove the well-known safety-liveness decomposition theorem, and reconstruct the topology of infinite traces. We define \emph{monitorability} for properties, providing neat topological characterisations for various monitorability classes, and arrange them into a hierarchy. All of our definitions and proofs are mechanised in Isabelle/HOL.
\end{abstract}

\begin{keyword}
Linear-time Temporal Logic \sep Partial Traces \sep Semantics \sep Formula Progression \sep Safety Properties \sep Liveness Properties \sep Guarantee Properties \sep Topology \sep Monitorability

\end{keyword}

\end{frontmatter}
$^*${\small\it Supported by Royal Society Wolfson Fellowship RSWF\textbackslash R1\textbackslash 221008}

\section{Introduction}
In the specification, verification and modelling of reactive and concurrent systems, \LT properties are a recurring theme. Linear-time properties are sets of \emph{behaviours}: completed,
infinite \emph{traces}---sequences of states---describing the execution of a system over time. Logics for the specification of such properties, such as \LTL (LTL)~\cite{ltl} are well-studied and widely used, and a variety of algorithmic techniques for model checking and verification exist~\cite{vardiwolper}. In the context of \emph{run-time} verification, testing, or monitoring, however, semantics in terms of behaviours appears insufficient, as the full infinite behaviour of the system cannot be observed: we can only observe a finite sequence of states.

In light of this, we might consider switching to a semantics which includes finite traces
as models of temporal properties. There are several variants of LTL which include finite traces as models, the oldest of which, called LTL$\!f$, is commonly attributed to Pnueli\footnote{Such logics are found in many early papers on LTL with Pnueli as a
coauthor such as Lichtenstein et al.~\cite{fltl}, but Manna {\&} Pnueli~\cite{ltlsafety}, which
is usually cited, does not mention finite traces at all.}. These logics concern finite or infinite \emph{completed} traces, but this is also not suitable in the context of run-time verification, as our finite observations are not completed traces, but finite \emph{prefixes} of infinite behaviours. In other words, the behaviour of the system is still infinite, it is only our \emph{observation} of the behaviour that is finite.

Therefore, in this paper we advocate remaining with infinite trace semantics even in a run-time verification context. If we are monitoring for the property ``eventually, a good state is reached'', then our observation can stop the moment a good state is reached, as the property has been satisfied already and no further observation is needed. Similarly, if we a monitoring for the property ``the bad state is never reached'', our observation can stop as soon as the bad state is reached, as the property has been irrevocably violated. Thus, there are properties for which \emph{definitive} answers---satisfaction or violation---can be determined after observation of just a finite prefix. We call such prefixes \emph{definitive prefixes}, and sets of infinite traces and their definitive prefixes \emph{definitive sets}. We introduce our theory of definitive prefixes and sets in Section~\ref{sec:dps}, including a proof that definitive sets are determined uniquely by their infinite traces, and thus that definitive sets and \LT properties are isomorphic.

Bauer et al.~\cite{ltl3tosem}  describe a multi-valued variant of LTL for partial traces
called {\ltlthree} that distinguishes between those formulae that can be definitively said to be true or false from just the partial trace provided, and 
those formulae which are indeterminate, requiring further states to evaluate definitively. As we shall see in Section~\ref{sec:ltl}, the semantics of \ltlthree in the literature are given only in terms of conventional LTL, and Bauer et al.~\cite{bauercomparing} further claim that \ltlthree \emph{cannot} 
be given an inductive semantics, but this claim applies only to the intensional kind of semantics typically given for LTL. In Section~\ref{sec:sems} we give a compositional, inductive semantics for \ltlthree, in terms of families of 
definitive sets. By virtue of our isomorphism, we further show that the semantics of conventional LTL and of \ltlthree correspond directly. This correspondence is so direct that \ltlthree can be viewed merely as a different presentation of 
	conventional LTL: They share the same syntax, and their meanings are related by a simple transport.

In Section~\ref{sec:fp} we turn to \emph{formula progression}, a popular technique for evaluating formulae against a finite trace where the formula is evaluated state-by-state, in a style reminiscent of operational semantics or the Brzozowski derivative. Bauer {\&} Falcone~\cite{fpltl3fm} claim without proof that formula progression yields an equivalent semantics to \ltlthree. In this paper, we make this statement formally precise, and prove soundness and completeness (modulo a sufficiently powerful simplifier) of the formula progression technique with respect to our semantics.

In Section~\ref{sec:properties}, we investigate various classes of properties and formally describe their monitorability characteristics. Based on our definitive 
prefixes, we define the \emph{guarantee kernel} of a property, which, when viewed as an interior operator gives a topology equivalent to that of Alpern {\&} Schneider~\cite{alpernschneider}. This enables us to re-prove their classic result, that every property is the intersection of a safety and liveness property, by other means. We then formally characterise various notions of monitorability, including that of Bauer et al.~\cite{bauercomparing} and of Pnueli {\&} Zaks~\cite{PnueliZaks}, and arrange them into a hierarchy.

Finally, in Section~\ref{sec:disc}, we relate our work to other characterisations of traces and properties, as well as to other multi-valued variants of LTL\@.  All of our work has been mechanised in the Isabelle/HOL proof assistant, proofs of which are available for download~\cite{ourrepo,ourproofs}. The mechanised proof scripts are intended as a comprehensive library, and, being included in Isabelle's Archive of Formal Proofs, are maintained to ensure they are always compatible with the current stable release of Isabelle.
An extended abstract of (part of) this paper appeared as \cite{EPTCS412.4}. The present paper contains significant additional work expanding on monitorability for linear-time properties, providing topological characterisations for various monitorability classes and demonstrating a hierarchy between them.

\section{Definitive Prefix Sets}
\label{sec:dps}
In this section, we develop a theory of \emph{definitive prefixes}, which we will use in Section~\ref{sec:sems} to give a semantics to \ltlthree, and in Section~\ref{sec:properties} to define various classes of properties.

Given a set of \emph{atomic propositions}, the directly observable properties of a state of a system are the atomic propositions that hold in that state. For this reason we abstractly model a state simply as a set of atomic propositions. Let $\Sigma$ be a set of such states.
A \emph{trace} $t$ may be finite (in $\ftraces$) or infinite (in $\itraces$). We denote the set of all traces, i.e.\ $\ftraces \cup \itraces$, as $\traces$. We sometimes address a sequence $a_1 a_2 \cdots$ of atomic propositions as a trace; formally this refers to any sequence $\sigma_1 \sigma_2 \cdots \in \traces$ with $a_i \in \sigma_i$ for all relevant $i$. Two traces $t$ and $u$ may be concatenated in the obvious way, written as $tu$. If $t$ is infinite, then $tu = t$. The empty trace is denoted $\varepsilon$.
We denote the set of \emph{prefixes} of a trace $t$ as $\prefixes t$:
\[\prefixes t \triangleq \{ u \mid \exists v \in \traces.\ t = uv \}\]
We also generalise this notation to sets, so $\prefixes X$ is the set of all prefixes of traces in $X$. The set of all \emph{extensions} of a trace $t$ is likewise written as $\extensions t$:
\[
\extensions t \triangleq \{ tu \mid u \in \traces \}
\]
The set of \emph{definitive prefixes} of a set of traces $X$ is written $\dclose X$. This is the set of all traces for which all extensions are a prefix of a trace in $X$.
\[
\dclose X \triangleq \{ t \mid \extensions t \subseteq \prefixes X \}
\]
Equivalently, $t$ is a definitive prefix of $X$ iff $X$ contains all infinite extensions of $t$: $\dclose X = \{ t \mid \extensions t \cap \Sigma^\omega \subseteq X \}$.
Intuitively, this means that $\dclose X$ contains all those traces for which reaching $X\cap\Sigma^\omega$ is in some way \emph{inevitable}, even if it hasn't happened yet. Definitive prefixes are therefore similar to the notion of \emph{good} and  \emph{bad} prefixes from Kupferman {\&} Vardi~\cite{kupfermanvardi}, just without any moral judgement (see Section~\ref{sec:goodbad}). 

 A set $X$ of traces is called \emph{definitive} iff $X = \dclose X$. Let $\mathcal{D} \subseteq \Pow(\traces)$ denote the set of all definitive sets.

For example, the set $\{ t \in \traces \mid \exists uv.\ t = u\textsf{a}v\}$ (i.e., all traces that contain the state $\mathsf{a}$) is definitive, as it contains all its definitive prefixes (i.e., those prefixes that contain $\mathsf{a}$).

For any set of traces $X$, we have the following straightforwardly from the definitions:
\begin{itemize}
\item All definitive prefixes are prefixes, i.e.\ $\dclose X \subseteq \prefixes X$.
\item The set $\dclose X$ itself is definitive, i.e.\ $\dclose \dclose X = \dclose X$.
\item Any extension of a definitive prefix is also a definitive prefix, that is, $ \forall t \in \dclose X. \uparrow t \subseteq \dclose X$.
\item The definitive prefix operator $\lightning$ distributes over intersection, that is, $\lightning (\bigcap_{i \in I} X_i) = \bigcap_{i \in I} \lightning X_i$.
\end{itemize}
The sets $\emptyset$ and $\traces$ are both definitive, and the definitive sets are closed under intersection, i.e., for a set of definitive sets $S$, $\lightning \bigcap S = \bigcap S$. This follows from the distributivity theorem above. The definitive sets are not closed under union, however. To see why, consider when $\Sigma = \{\texttt{A}, \texttt{B}\}$ and the set $X_\texttt{A}$ contains all traces starting with $\texttt{A}$ and the set $X_\texttt{B}$ contains all traces starting with $\texttt{B}$. The sets $X_\texttt{A}$ and $X_\texttt{B}$ are both definitive, but their union is not: neither $X_\texttt{A}$ nor $X_\texttt{B}$ contain the empty trace $\varepsilon$, but $ \varepsilon \in \dclose (X_\texttt{A} \cup X_\texttt{B})$, as all extensions of $\varepsilon$ (i.e.\ all non-empty traces) must begin with either $\texttt{A}$ or $\texttt{B}$.

\subsection{Lattice Properties}
Define the \emph{definitive union}, written $\dbigunion S$ or $X \dunion Y$ in the binary case, as merely the definitive prefixes of the union: 
$$
\dbigunion S \triangleq \dclose \bigcup S
$$
\begin{thm}The definitive union gives least upper bounds for definitive sets ordered by set inclusion, i.e., for a set $S \subseteq \mathcal{D}$ of definitive sets:
\begin{itemize}
\item For all $X \in S$, $X \subseteq \bigdunion S$.
\item If there is a definitive set $Z$ such that $\forall X \in S.\ X \subseteq Z$, then $\bigdunion S \subseteq Z$.
\end{itemize}
\end{thm}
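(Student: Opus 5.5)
The plan is to prove both items directly from the definition $\dbigunion S = \dclose \bigcup S$, using two facts about $\dclose$: that it is monotone, and that $\dclose\dclose X = \dclose X$, both recorded above. Monotonicity is not stated explicitly, so I will derive it first. If $X \subseteq Y$ then $\prefixes X \subseteq \prefixes Y$. Hence $\extensions t \subseteq \prefixes X$ implies $\extensions t \subseteq \prefixes Y$, giving $\dclose X \subseteq \dclose Y$. Alternatively, monotonicity follows from distributivity over intersection: from $X = X \cap Y$ we get $\dclose X = \dclose X \cap \dclose Y \subseteq \dclose Y$.

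For the first item, take $X \in S$. Then $X \subseteq \bigcup S$, so by monotonicity $\dclose X \subseteq \dclose \bigcup S = \dbigunion S$. Since $X$ is definitive, $X = \dclose X$, and so $X \subseteq \dbigunion S$. The only content here is the definitiveness of $X$, which is needed because $\dclose$ need not be extensive on arbitrary sets.

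For the second item, let $Z$ be definitive with $X \subseteq Z$ for every $X \in S$. Then $\bigcup S \subseteq Z$, and monotonicity gives $\dbigunion S = \dclose \bigcup S \subseteq \dclose Z = Z$.

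There is no real obstacle. The one point needing care is that $\dclose$ is not in general a closure operator on all of $\Pow(\traces)$, since $X \subseteq \dclose X$ can fail for non-definitive $X$. The argument must therefore invoke the hypothesis $X = \dclose X$ for members of $S$, and the definitiveness of $Z$, at exactly the places above. We do not need the fact that $\dbigunion S$ is itself definitive, which holds by idempotence, for these two items. It is only required to conclude that $\dbigunion S$ is a least upper bound \emph{within} $\mathcal{D}$.
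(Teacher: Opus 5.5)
Your proof is correct and is essentially the paper's approach: the paper just says the result "follows from definitions," and your argument (monotonicity of $\dclose$ plus $X = \dclose X$ for $X \in S$ and $Z = \dclose Z$) is exactly that unfolding. Your remark that $X \subseteq \dclose X$ can fail for non-definitive $X$, so the definitiveness hypotheses are genuinely needed, is also accurate.
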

\begin{proof}Follows from definitions.	
\end{proof}

\noindent Thus, the definitive sets $\mathcal{D}$ ordered by set inclusion form a complete lattice, where the supremum is the definitive union, the infimum is the intersection, the greatest element $\top$ is $\traces$ and the least element $\bot$ is $\emptyset$.

\subsection[Isomorphism to LTL Properties]{Isomorphism to Linear-time Temporal Properties}

\newcommand{\Prop}{\mathsf{Pr}}
\newcommand{\Definitives}{\mathsf{Df}}
\begin{thm}\label{thm:bijection}Define the lower adjoint $\Prop : \mathcal{D} \rightarrow \Pow(\itraces)$ as $\Prop(X) = X \cap \itraces$, and the upper adjoint $\Definitives : \Pow(\itraces) \rightarrow \mathcal{D}$ as $\Definitives(P) = \dclose P$. We have, for any definitive set $X$ and \LT temporal property $P$: $$ \Prop(X) = P\ \text{if and only if}\ X = \Definitives(P)$$\end{thm}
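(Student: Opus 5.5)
The plan is to prove the two directions separately. Both rest on two facts about infinite traces. The first is $\extensions t = \{t\}$ for $t \in \itraces$, since $tu = t$ for infinite $t$. The second is the reformulation given earlier: $t \in \dclose X$ iff $\extensions t \cap \itraces \subseteq X$. Beyond that, I will only need the lemmas that $\dclose X$ is itself definitive, so $\Definitives$ lands in $\mathcal{D}$, and that $\dclose X \subseteq \prefixes X$.

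First I establish the key identity $\dclose P \cap \itraces = P$ for any $P \subseteq \itraces$. Take an infinite $t$. Then $\extensions t \cap \itraces = \{t\}$, so $t \in \dclose P$ iff $t \in P$.

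For the direction ``$X = \Definitives(P)$ implies $\Prop(X) = P$'': here $\Prop(X) = \dclose P \cap \itraces$, and the key identity gives $P$ immediately.

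For the direction ``$\Prop(X) = P$ implies $X = \Definitives(P)$'': I must show $X = \dclose(X \cap \itraces)$ for definitive $X$. Since $X = \dclose X$, it suffices to show $\dclose X = \dclose(X \cap \itraces)$. By the reformulation, membership of $t$ in either side depends only on whether $\extensions t \cap \itraces$ is contained in $X$, resp. in $X \cap \itraces$. These two conditions are equivalent because $\extensions t \cap \itraces$ consists only of infinite traces. So the two sets coincide, and hence $X = \dclose P = \Definitives(P)$.

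The only point needing care, and the expected obstacle, is justifying the reformulation $\dclose X = \{ t \mid \extensions t \cap \itraces \subseteq X\}$ from the official definition via $\prefixes X$. Since the paper states it as an equivalence, I will take it as given. If it must be checked, I would argue as follows. Suppose $\extensions t \subseteq \prefixes X$ and $u \in \extensions t$ is infinite. Then $u$ is a prefix of some $x \in X$. As $u$ is infinite, $x = uv = u$, so $u \in X$. Conversely, suppose every infinite extension of $t$ is in $X$, and take any finite extension $tu$. Extend it arbitrarily to an infinite trace (assuming $\Sigma \neq \emptyset$). That infinite trace is in $X$, so $tu \in \prefixes X$. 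The degenerate case of empty $\Sigma$ needs a small separate check.
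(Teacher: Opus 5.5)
Your proof is correct and follows the paper's structure. Both treat the two directions separately, obtain $\dclose P \cap \itraces = P$ from $\extensions t = \{t\}$ for infinite $t$, and reduce the converse to $X = \dclose(X \cap \itraces)$. You derive that last identity a little more cleanly, by noting that $\dclose Y$ depends only on $Y \cap \itraces$; the paper instead argues via extension-closure of $X$ and extending finite traces to infinite ones.

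One correction to your final remark: the case $\Sigma = \emptyset$ cannot be settled by a separate check. There $\{\varepsilon\}$ is definitive and $\Prop(\{\varepsilon\}) = \emptyset$, yet $\Definitives(\emptyset) = \emptyset \neq \{\varepsilon\}$. So the theorem itself, like the ``equivalently'' characterisation you rely on, genuinely requires $\Sigma \neq \emptyset$. The paper tacitly assumes this, and it always holds for $\Sigma = \Pow(A)$.
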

\begin{proof} Proving each direction separately:
\begin{description}
	\item[$\implies$] \!It suffices to show that $\Definitives(\Prop(X)) = X$, i.e.\ $ \lightning(X \cap \itraces) = X$. Because any\linebreak[3] extension of a definitive prefix is also a definitive prefix, and all traces in $X$ are definitive prefixes, every finite trace in $X$ must be a prefix of some infinite trace in $X$. Thus, the infinite traces in $X$ alone are sufficient to describe all their definitive prefixes, i.e., all traces in $X$.
	\item[$\impliedby$] \!It suffices to show that $\Prop(\Definitives(P)) = P$, i.e.\ $\lightning P \cap \itraces = P$. Recall that the definitive prefixes of $P$ are all those traces $t$ for which all extensions of $t$ are a prefix of a trace in $P$. For an infinite trace $t \in P$, 
	the set of extensions $\extensions t$ is just $\{ t \}$, which is surely contained in $\prefixes P$. Therefore, for a \LT temporal property $P$ (consisting only of infinite traces), we can conclude $P \subseteq \lightning P$. In fact, $\lightning P$ consists of all the (infinite) traces of $P$ as well as possibly some finite prefixes of these. Hence $\lightning P \cap \itraces = P$.
\popQED
\end{description}	
\end{proof}
\begin{thm}
	$\Prop$ (and likewise for $\Definitives$) is monotone and preserves least upper and greatest lower bounds, i.e:
	\begin{itemize}
		\item If $A \subseteq B$ then $\Prop(A) \subseteq \Prop(B)$
		\item $\Prop(\bigcap_{i \in I} X_i) = \bigcap_{i \in I} \Prop(X_i)$
		\item $\Prop(\dbigunion_{i \in I} X_i) = \bigcup_{i \in I} \Prop(X_i)$
	\end{itemize}
\end{thm}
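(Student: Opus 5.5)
I will prove the three bullet points for $\Prop$ directly from the definitions, and then obtain the claims for $\Definitives$ by transport along the bijection of Theorem~\ref{thm:bijection}.

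\textbf{Monotonicity and meets.} Monotonicity of $\Prop$ is immediate, since intersecting with $\itraces$ preserves inclusion. Preservation of greatest lower bounds is also direct: in $\mathcal{D}$ the infimum is plain intersection, and
$(\bigcap_i X_i)\cap\itraces = \bigcap_i (X_i\cap\itraces)$.
For $I=\emptyset$ I read the empty intersection as $\traces$ on one side and $\itraces$ on the other, and these agree under $\Prop$.

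\textbf{Joins.} This is the step I expect to be the real content. I must show $(\dclose \bigcup_i X_i)\cap\itraces = \bigcup_i (X_i\cap\itraces)$. The key lemma is that for any set $Y$ of traces, $\dclose Y\cap\itraces = Y\cap\itraces$.
\begin{itemize}
\item Inclusion $\supseteq$: an infinite $t\in Y$ has $\extensions t=\{t\}\subseteq\prefixes Y$, so $t\in\dclose Y$.
\item Inclusion $\subseteq$: if $t$ is infinite and $t\in\dclose Y$, then $t\in\extensions t\subseteq\prefixes Y$. So $t$ is a prefix of some $u\in Y$, and since $t$ is infinite, $u=t$.
\end{itemize}
This is the argument already used in the $\impliedby$ direction of Theorem~\ref{thm:bijection}, now applied to an arbitrary set. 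Applying the lemma with $Y=\bigcup_i X_i$ gives the third bullet.

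\textbf{Transport to $\Definitives$.} By Theorem~\ref{thm:bijection}, $\Definitives$ is the inverse bijection of $\Prop$. Monotonicity of $\Definitives$ can also be seen directly, since $\dclose$ is monotone. For bounds, I will rewrite $\Definitives(\bigcap_i P_i)$ as $\Definitives(\bigcap_i\Prop(\Definitives P_i))$. By the second bullet this equals $\Definitives(\Prop(\bigcap_i \Definitives P_i))=\bigcap_i\Definitives(P_i)$. The same pattern, using the third bullet, gives $\Definitives(\bigcup_i P_i)=\dbigunion_i \Definitives(P_i)$.

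Alternatively, an order isomorphism between complete lattices automatically preserves all bounds. I would cite that fact once monotonicity in both directions is established.
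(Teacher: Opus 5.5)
Your proposal is correct and follows the paper's proof. Monotonicity and meets are immediate, and the join case rests on the same key fact, $\Prop(\dclose S) = \Prop(S)$ for arbitrary $S$, which the paper also takes from the proof of Theorem~\ref{thm:bijection}; your explicit transport of the bounds to $\Definitives$, and your handling of $I=\emptyset$, fill in what the paper leaves implicit in ``and likewise for $\Definitives$''.
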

\begin{proof}
The first two statements follow directly from definitions. Preservation of least upper bounds requires more finesse. As we have already seen in the proof of Theorem~\ref{thm:bijection}, for any set of traces $S$, $\Prop(\lightning S) = \Prop(S)$. This means that $\Prop(\dbigunion_{i \in I} X_i) = \Prop(\bigcup_{i \in I} X_i) = \bigcup_{i \in I} \Prop(X_i)$ as required.
\end{proof}
\noindent These theorems say that $(\Prop, \Definitives)$ forms a \emph{lattice isomorphism} between definitive sets and \LT temporal properties.

\section[LTL]{\LTL}
\label{sec:ltl}

\begin{figure}
\vspace{-2ex}
\begin{center}
$$
\begin{array}{llcl}
	\text{Formulae}&\varphi, \psi & ::= & \top \mid a \\
	& & \mid & \neg \varphi \\
	& & \mid & \varphi \land \psi \\
	& & \mid & \Next \varphi \\
	& & \mid & \varphi \Until \psi \\
\text{Atomic propositions} & a & \in & A \\
\text{States} & \Sigma & = & \Pow(A) \\
\text{Traces} & t, u & \in & \traces \\[1em]
\end{array}
$$
Abbreviations:
$$
\begin{array}{lcl}
	\bot & \triangleq & \neg \top \\
	\varphi \lor \psi & \triangleq & \neg (\neg \varphi \land \neg \psi) \\
	\Eventually \varphi & \triangleq & \top \Until \varphi \\
	\varphi \Release \psi & \triangleq & \neg (\neg \varphi \Until \neg \psi)\\
	\Always \varphi & \triangleq & \bot \Release \varphi
\end{array}
$$
\end{center}
\caption{Syntax of LTL}	
\label{fig:syntax}
\end{figure}

Figure~\ref{fig:syntax} describes the syntax of LTL formulae and adjacent definitions. LTL extends propositional logic over states with temporal operators to produce a logic over traces --- sequences of states.

Our formulation takes conjunction, negation, atomic propositions and the temporal operators \emph{next} ($\Next$) and \emph{until} ($\Until$) as primitive, with disjunction and the temporal operators \emph{eventually} ($\Eventually$), \emph{always} ($\Always$) and \emph{release} ($\Release$) derived from these primitives. 
\begin{figure}
\begin{displaymath}
\begin{array}{lcl}
    t \models \top \\
	t \models a & \text{iff} & a \in t_0 \\
	t \models \neg \varphi & \text{iff} & t \nmodels \varphi\\
	t \models \varphi \land \psi & \text{iff} & t \models \varphi\ \text{and}\ t \models \psi\\
	t \models \Next \varphi & \text{iff} & \drop{t}{1} \models \varphi \\
	t \models \varphi \Until \psi & \text{iff} & \text{there exists}\ i\ \text{s.t.}\ \drop{t}{i} \models \psi\ \text{and}\ \forall j < i.\ \drop{t}{j} \models \varphi
\end{array}	
\end{displaymath}
\caption{Semantics of conventional LTL}	
\label{fig:ltlsemantics}
\end{figure}

Figure~\ref{fig:ltlsemantics} gives the semantics of conventional LTL, as a satisfaction relation
whose models are infinite traces. Here $t_0$ denotes the first state of a trace $t$. As we only include future temporal operators, we can advance to the future by dropping initial prefixes from the trace. The notation $\drop{t}{n}$ denotes the trace $t$ without the first $n$ states. If $n$ is greater than the length of $t$, the result of $\drop{t}{n}$ is the empty trace $\varepsilon$. Note that our \emph{until} operator ($\Until$) is \emph{strong}, in that $\varphi \Until \psi$ requires that $\psi$ eventually becomes true at some point in the trace. 

Bauer et al.~\cite{ltl3tosem} describe \ltlthree as a \emph{three-valued} logic that interprets LTL formulae on finite prefixes to obtain a truth value in $\mathbb{B}_3 = \{\true,\false,\unknown\}$. For a formula $\varphi$ and a finite prefix $t$, the truth value $\true$ indicates that $\varphi$ can be definitively established from $t$ alone, whereas $\false$ indicates that $\varphi$ can be definitively refuted from $t$ alone. The third value $\unknown$ indicates that the formula $\varphi$ can neither be established nor refuted from $t$ alone:
$$
[t \models_3 \varphi] = \begin{cases} \true & \text{if}\ \forall u \in \itraces.\ tu \models \varphi \\
    \false & \text{if}\ \forall u \in \itraces.\ tu \nmodels \varphi  \\
    \unknown & \text{otherwise}\end{cases}
$$%
Because the truth value $\unknown$ indicates merely that neither $\true$ nor $\false$ apply, \ltlthree can be better understood as a two-valued \emph{partial logic}~\cite{partiallogic}, where $\unknown$ indicates the \emph{absence} of a truth value. The logic is partial in the sense that its evaluation is a partial function from possibly-finite prefixes to $\{\true,\false\}$. The value $\unknown$ is not a third semantic state, but a computational placeholder for when a prefix is not contained within a property's definitive set, and thus no truth value has yet been assigned. In this view, \ltlthree only gives truth values when the trace is \emph{definitive}, i.e., when the answer given will not change regardless of how the trace is extended.

Bauer et al.~\cite{ltl3tosem} note that this presentation of  \ltlthree is inherently non-inductive, i.e., the answer given for a compound formula cannot be produced by combining the answers for its components.  To see why, consider the formula $\varphi = \Eventually a \lor \Eventually \neg a $. Using the semantics above, we have $[\varepsilon \models_3 \varphi ] = \true$ but each component of $\varphi$ produces no definitive answer for the empty trace, i.e.\ $[\varepsilon \models_3 \Eventually a] = [\varepsilon \models_3 \Eventually \neg a] = \unknown$. Likewise both components of the formula $\psi = \Eventually b \lor \Eventually \neg c$ produce the answer $\unknown$ for the empty trace, but unlike $\varphi$, we have $[ \varepsilon \models_3 \psi] = \unknown$. Therefore, there is no way to combine two $\unknown$ answers in a disjunction that produces correct answers for both $\varphi$ and $\psi$. Because of this, Bauer et al.~\cite{ltl3tosem} claim that inductive semantics are \emph{impossible} for \ltlthree. As we shall see, however, this claim applies only to the multi-valued semantics defined above. In our development, which associates sets of traces to each formula, the semantics for a given formula can indeed be compositionally constructed from the semantics of its components.

\section[Semantics of LTL and LTL3]{Semantics of LTL and \ltlthree}
\label{sec:sems}

\subsection{Answer-indexed Families}

We give a semantics to \ltlthree by compositionally assigning to each formula
an \emph{answer-indexed family} of definitive sets. In general, an \emph{answer-indexed family} is a function that, given an answer (e.g.~a value in $\mathbb{B} = \{ \true, \false \}$), produces a set of models (depending on the logic, this could be a set of states, a definitive set, a linear-time temporal property, etc.). This set contains all those models which produce the given answer for the formula in question. In this way, we invert the traditional presentation of multi-valued logics, where the truth value is the output of a satisfaction function, and instead take the desired answer $a$ as an \emph{input} and produce models as an \emph{output}: $a = [\sigma \models \varphi]$ becomes $\sigma \in \semantics{\varphi}\ a$. An answer-indexed family for conventional, infinite-trace LTL therefore produces a linear-time temporal property as an output:
\[\Phi,\Psi \; \in \;  \mathbb{B} \rightarrow \Pow (\itraces)\]
whereas an answer-indexed family for \ltlthree produces a definitive set as an output:
\[\Phi,\Psi\; \in\; \mathbb{B} \rightarrow \mathcal{D}\]
To begin with, we define an alternative semantics for \emph{conventional} LTL in terms of answer-indexed families of linear-time temporal properties. 
In this setting, the LTL formula $\Eventually \mathsf{a}$, for example, will have an answer-indexed family that maps $\true$ to the set of traces in which $\mathsf{a}$ eventually holds, and $\false$ to the set of traces in which $\mathsf{a}$ never holds.

We define various operations on answer-indexed families, one for each kind of LTL constructor:
\begin{align*}
\Trueaif \ \true &= \itraces &(\Phi \Conjaif \Psi) \ \true &= \Phi \ \true \cap \Psi \ \true & (\Phi \Disjaif \Psi) \ \true &= \Phi \ \true \cup \Psi \ \true & (\Negaif \Phi) \ \true &= \Phi \ \false \\
\Trueaif \ \false &= \emptyset &(\Phi \Conjaif \Psi) \ \false &= \Phi \ \false \cup \Psi \ \false & (\Phi \Disjaif \Psi) \ \false &= \Phi \ \false \cap \Psi \ \false & (\Negaif \Phi) \ \false &= \Phi \ \true 
\end{align*}
Note that the set operations used for the $\false$ answer are always the duals of the operations used for the $\true$ answer, which means that for conventional LTL, the set produced for the $\false$ answer is always the complement of the set for the $\true$ answer. This means that the operator for negation ($\Negaif$) can simply swap the places of the set and its complement. This is akin to performing a conversion to negation normal form ``just-in-time'' as we evaluate a formula.

\noindent For atomic propositions $a$, the corresponding answer-indexed family maps $\true$ to the set of all traces that begin with a state containing $a$, and $\false$ to its complement:
\begin{align*}
\Propaif{\emph{a}} \ \true &= \{t \mid t \in \itraces \land \emph{a} \in t_0 \} \\
\Propaif{\emph{a}} \ \false &= \{t \mid t \in \itraces \land  \emph{a} \notin t_0\}
\end{align*}
The semantic operator for $\Next \varphi$ formulae prepends one state to all the corresponding traces for $\varphi$, analogously to the conventional LTL semantics in Figure~\ref{fig:ltlsemantics}:
\begin{align*}
(\Nextaif \Phi) \ \true &= \{t \mid  \drop{t}{1} \in \Phi \ \true \} \\
(\Nextaif \Phi) \ \false &= \{t \mid  \drop{t}{1} \in \Phi \ \false \}
\end{align*}
The $\true$ case of the semantic operator for $\varphi \Until \psi$ formulae is also defined analogously to Figure~\ref{fig:ltlsemantics}, with the $\false$ case being the complement:
\begin{align*}
(\Phi \Untilaif \Psi) \ \true &= \{t \mid \exists k. (\forall i < k. \drop{t}{i} \in \Phi \ \true) \wedge \drop{t}{k} \in \Psi \ \true \} \\
(\Phi \Untilaif \Psi) \ \false &= \{t \mid \forall k. (\exists i < k. \drop{t}{i} \in \Phi \ \false) \vee \drop{t}{k} \in \Psi \ \false \}
\end{align*}
Finally, we put all of these semantic operators to use in Figure~\ref{aifsemantics}, which gives a compositional, inductive semantics to conventional LTL using these operators.

\begin{figure}
\vspace{-2ex}
\centering
\begin{align*}
\semantics{\top} \ &=\ \Trueaif \\
\semantics{\emph{a}} \ &=\ \Propaif{\emph{a}} \\
\semantics{\neg \varphi} \ &=\ \Negaif \semantics{\varphi}\\
\semantics{\varphi \wedge \psi} \ &=\ \semantics{\varphi} \Conjaif \semantics{\psi}\\
\semantics{\varphi \vee \psi} \ &=\ \semantics{\varphi} \Disjaif \semantics{\psi} \\
\semantics{\Next \varphi} \ &=\ \Nextaif \semantics{\varphi} \\
\semantics{\varphi \Until \psi} \ &=\ \semantics{\varphi} \Untilaif \semantics{\psi}
\end{align*}

\caption{LTL semantics using answer-indexed families}
\label{aifsemantics}
\end{figure}

\begin{thm}[Equivalence to conventional semantics]\label{thm:ltlequiv}
Answer-indexed\linebreak[4] family LTL semantics assigns the same truth values to a given trace for a given formula as conventional LTL semantics:
\begin{itemize}
\item $(t \vDash \varphi) \Longleftrightarrow (t \in \semantics{\varphi}\ \true)$
\item $\neg (t \vDash \varphi) \Longleftrightarrow (t \in \semantics{\varphi}\ \false)$
\end{itemize}
\end{thm}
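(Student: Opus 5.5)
We prove both items simultaneously by structural induction on $\varphi$, for all infinite traces $t \in \itraces$ at once; the quantification over $t$ must sit inside the induction hypothesis, because the $\Next$ and $\Until$ cases apply it to the suffixes $\drop{t}{i}$. A useful auxiliary invariant, proved alongside, is that $\semantics{\varphi}\ \false = \itraces \setminus \semantics{\varphi}\ \true$. Given the first item, the second item is then immediate, since $\neg(t \vDash \varphi)$ holds iff $t \notin \semantics{\varphi}\ \true$. Still, I would verify the $\false$ component directly in each case, as that is what the induction actually uses.

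The base cases are direct. For $\top$, $\Trueaif\ \true = \itraces$ and $\Trueaif\ \false = \emptyset$. For an atomic proposition $a$, the definition of $\Propaif{a}$ matches $a \in t_0$ exactly.

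For $\neg\varphi$, the operator $\Negaif$ swaps the two components, so each item for $\neg\varphi$ is the other item for $\varphi$; for the second item we also use that double negation holds classically. For $\varphi \land \psi$ (and for the derived $\lor$, if treated as primitive), the $\true$ case is intersection, matching the conventional clause. The $\false$ case is union, which matches by De Morgan: $t \nvDash \varphi\land\psi$ iff $t\nvDash\varphi$ or $t\nvDash\psi$. For $\Next\varphi$, apply the induction hypothesis to $\drop{t}{1}$. This trace is again infinite, since $t$ is, so the hypothesis applies.

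The $\Until$ case is the main one. The $\true$ component is literally the conventional clause, with the induction hypothesis applied to each suffix $\drop{t}{i}$ and $\drop{t}{k}$; all of these are infinite, so they lie in the range covered by the hypothesis. For the $\false$ component, negate the conventional clause to get: for all $k$, either some $i<k$ has $\drop{t}{i}\nvDash\varphi$, or $\drop{t}{k}\nvDash\psi$. This is a first-order De Morgan step, pushing negation through $\exists k$ and then through a conjunction and a $\forall i$. The induction hypothesis, second item, applied pointwise to suffixes then turns this into membership in $(\semantics{\varphi}\Untilaif\semantics{\psi})\ \false$. The only real care needed is to state the induction hypothesis universally over all infinite traces so it can be instantiated at every suffix. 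No fixpoint reasoning is needed, since both the conventional and the answer-indexed semantics are given by explicit quantifiers.
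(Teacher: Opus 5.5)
Your proposal is correct and takes the same approach as the paper. The paper's proof is a simultaneous structural induction on $\varphi$, with the $\false$ components handled by De Morgan duality ``in the same way as a conversion to negation normal form''. Your explicit checks, including stating the hypothesis for all infinite traces so it can be instantiated at the suffixes $\drop{t}{i}$ in the $\Next$ and $\Until$ cases, are exactly the details the paper leaves implicit.
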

\begin{proof}
This is proven straightforwardly by induction on $\varphi$, justified in the same way as a conversion to negation normal form.
\end{proof}

\subsection{The Prepend Operation}
To give a semantics to \ltlthree, our answer-indexed families will produce definitive sets, rather than linear-time temporal properties. To this end, we will define an auxiliary operation on definitive sets called \emph{prepend}, written $\prepend X$, which gives all traces whose tails are in $X$:
$$\prepend X\; \triangleq \; \{ t \mid \drop{t}{1} \in X \}$$

\begin{thm}\label{thm:prependclosed}
The \emph{prepend} operation is closed for definitive sets. That is, if $X$ is definitive, then $\prepend X$ is definitive.
\end{thm}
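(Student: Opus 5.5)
The plan is to verify the two inclusions $\prepend X \subseteq \dclose(\prepend X)$ and $\dclose(\prepend X) \subseteq \prepend X$ directly. Throughout I will use the equivalent characterisation from Section~\ref{sec:dps}: $t \in \dclose Y$ iff every infinite extension of $t$ lies in $Y$. I will also use the elementary fact that dropping the first state commutes with extension for non-empty $t$: if $t = \sigma t'$ then $\drop{(tu)}{1} = t'u = \drop{t}{1}\,u$. The only real obstacle is the empty trace. There $\drop{\varepsilon}{1} = \varepsilon$, so this identity fails, and $\varepsilon$ has to be handled as a separate case in each direction.

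For $\prepend X \subseteq \dclose(\prepend X)$, let $\drop{t}{1} \in X$ and let $w$ be an infinite extension of $t$. I need $\drop{w}{1} \in X$.
\begin{itemize}
\item If $t = \sigma t'$ is non-empty, then $\drop{w}{1}$ is an extension of $t' = \drop{t}{1} \in X = \dclose X$. Since extensions of definitive prefixes are definitive prefixes, $\drop{w}{1} \in \dclose X = X$.
\item If $t = \varepsilon$, then $\varepsilon \in X = \dclose X$, so every infinite trace lies in $X$. In particular the infinite trace $\drop{w}{1}$ lies in $X$.
\end{itemize}
In both cases $w \in \prepend X$.

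For $\dclose(\prepend X) \subseteq \prepend X$, let every infinite extension of $t$ lie in $\prepend X$. I will show $\drop{t}{1} \in \dclose X$, which equals $X$ because $X$ is definitive. So take an infinite extension $v$ of $\drop{t}{1}$.
\begin{itemize}
\item If $t = \sigma t'$, then $\sigma v$ is an infinite extension of $t$. Hence $\sigma v \in \prepend X$, i.e.\ $v = \drop{(\sigma v)}{1} \in X$.
\item If $t = \varepsilon$, I pick any state $\sigma \in \Sigma$; here $\Sigma = \Pow(A)$ is non-empty. Then $\sigma v$ is an infinite extension of $\varepsilon$, so again $v \in X$.
\end{itemize}
Therefore $\drop{t}{1} \in \dclose X = X$, so $t \in \prepend X$.

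Combining the two inclusions gives $\dclose(\prepend X) = \prepend X$. The main point of care is the empty-trace case analysis, together with the use of non-emptiness of $\Sigma$ in the second direction; apart from that, each step is a direct unfolding of the definitions.
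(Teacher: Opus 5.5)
Your proof is correct and follows essentially the same route as the paper: two inclusions, each split into the cases $t=\varepsilon$ and $t=\sigma u$, with the tail/prepend manipulation doing the work. The only difference is cosmetic: you use the infinite-extension characterisation of $\dclose$ directly, whereas the paper works with $\extensions t \subseteq \prefixes X$ and the identity $\prefixes(\prepend X) = \prepend(\prefixes X)$.
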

\begin{proof}
We must show that $\lightning (\prepend X) = \prepend X$ for any definitive set $X$. Showing each direction separately:
\begin{description}
	\item[$\implies$] Given a definitive prefix $t \in \lightning(\prepend X)$, we must show that $t \in \prepend X$. If $t = \varepsilon$, then this implies that $\prepend X = \traces$ and therefore $t \in \prepend X$. If $t = \sigma{}u$, because $\prefixes (\prepend X) = \prepend (\prefixes X)$, we can conclude $\extensions \sigma{}u \subseteq \prepend (\downarrow X) $. Taking the tail of both sides, we can see that $\extensions u \subseteq {\prefixes X}$ and therefore $u \in X$ as $X$ is definitive. Prepending $\sigma$ to both sides, we conclude that $\sigma{}u \in \prepend X$ as required.
	\item[$\impliedby$] Given a prefix $t \in \prepend X$, we must show that $t \in \lightning (\prepend X)$. If $t = \varepsilon$, this means that $X = \prepend X = \lightning (\prepend X) = \traces$ as $X$ is definitive. If $t = \sigma{}u$, we know that $u \in X$. As $X$ is definitive, all extensions of $u$ are also in $X$. Therefore $\prepend (\extensions u) \subseteq X$ and thus $\sigma{}u \in \lightning (\prepend X)$.
\popQED
\end{description}
\end{proof}

\subsection[Semantics for LTL3]{Semantics for \ltlthree}
The semantic operators for \ltlthree resemble that of conventional LTL, except that now we work with definitive sets rather than linear-time temporal properties. 
\begin{align*}
\Trueaif _3 \ \true &= \traces &  \Trueaif _3 \ \false &= \emptyset\\
(\Negaif _3 \Phi) \ \true &= \Phi \ \false & (\Negaif _3 \Phi) \ \false &= \Phi \ \true\\
(\Phi \Conjaif_3 \Psi) \ \true &= \Phi \ \true \cap \Psi \ \true & (\Phi \Conjaif_3 \Psi) \ \false &= \Phi \ \false \dunion \Psi \ \false\\
(\Phi \Disjaif_3 \Psi) \ \true &= \Phi \ \true \dunion \Psi \ \true & (\Phi \Disjaif_3 \Psi) \ \false &= \Phi \ \false \cap \Psi \ \false \\
\end{align*}
All of the sets produced by these answer-indexed families are definitive, as $\traces$ and $\emptyset$ are both definitive sets and definitive sets are closed under intersection and definitive union. Unlike with conventional LTL, the set for the $\false$ answer is not the complement of the set for the $\true$ answer, as definitive sets are not closed under complement. The set for $\true$ contains all traces that are sufficient to definitively satisfy the formula, and the set for $\false$ contains all traces that are sufficient to definitively refute the formula.

For an atomic proposition $a$, the set for $\true$ contains all non-empty traces that begin with a state that satisfies $a$, and the set for $\false$ contains all non-empty traces that begin with a state that does not satisfy $a$. However, if $a$ is trivial, in the sense that \emph{all} or \emph{no} possible states satisfy $a$, then these sets are not definitive, as the excluded empty trace $\varepsilon$ would also be definitive for these sets. Thus, we take the definitive prefixes of these sets to account for this possibility:%
\begin{align*}
\Propaifthree{\emph{a}} \ \true &= \lightning\{t \mid t \neq \varepsilon \wedge  \emph{a} \in t_0 \} \\
\Propaifthree{\emph{a}} \ \false &= \lightning\{t \mid t \neq \varepsilon \wedge  \emph{a} \notin t_0\} \end{align*}
For the $\Next$ operator, we make use of the prepend operator, which by Theorem~\ref{thm:prependclosed} produces definitive sets:
\begin{align*}
(\Nextaif _3 \Phi) \ \true &= \prepend \ (\Phi \ \true) \\
(\Nextaif _3 \Phi) \ \false &= \prepend \ (\Phi \ \false)
\end{align*}
For the $\Until$ operator, we construct our semantics iteratively, building up by repeatedly prepending states. Here the notation $f^k$ indicates the self-composition of $f$ $k$ times, i.e.~$f^0(x) = x$ and $f^{k+1}(x) = f^k(f(x))$:
\begin{align*}
(\Phi \Untilaif _3 \Psi) \ \true &= \bigdunion_{k \in \mathbb{N}} f^k(\Psi \ \true), \ \text{where} \ f(X) = \prepend X \cap \Phi \ \true\\
(\Phi \Untilaif _3 \Psi) \ \false &= \bigcap_{k \in \mathbb{N}} f^k(\Psi \ \false), \ \text{where} \ f(X) = \prepend X \dunion \Phi \ \false
\end{align*}
Because definitive sets are closed under intersection, definitive union and the prepend operator, we can see that that our $\Untilaif$ operator also produces definitive sets by a simple inductive argument on the natural number $k$.  Using all of these operations, we construct an inductive, compositional semantics for \ltlthree  in Figure~\ref{ltl3semantics}.  
\begin{figure}
\vspace{-2ex}
\centering
\begin{align*}
\semantics{\top} _3 \ &=\ \Trueaif _3 \\
\semantics{\emph{a}} _3 \ &=\ \Propaifthree{\emph{a}} \\
\semantics{\neg \varphi} _3 \ &=\ \Negaif_3\ \semantics{\varphi} _3\\
\semantics{\varphi \wedge \psi} _3 \ &=\ \semantics{\varphi} _3  \Conjaif _3  \semantics{\psi} _3\\
\semantics{\varphi \vee \psi} _3 \ &=\ \semantics{\varphi} _3  \Disjaif _3  \semantics{\psi} _3 \\
\semantics{\Next \varphi} _3 \ &=\ \Nextaif_3\ \semantics{\varphi} _3 \\
\semantics{\varphi \Until \psi} _3 \ &=\ \semantics{\varphi}_3 \Untilaif_3 \semantics{\psi}_3 
\end{align*}

\caption{\ltlthree semantics using answer-indexed families}
\label{ltl3semantics}
\end{figure}
\begin{thm}[Equivalence to original \ltlthree definition]\label{thm:ltl3old}
	Let $t$ be a finite prefix and $\varphi$ be an LTL formula. Then:
\begin{itemize}
	\item $t \in \semantics{\varphi}_3\ \true \iff \forall u \in \itraces.\ tu \in \semantics{\varphi}\ \true$
	\item $t \in \semantics{\varphi}_3\ \false \iff \forall u \in \itraces.\ tu \in \semantics{\varphi}\ \false$
\end{itemize}
\end{thm}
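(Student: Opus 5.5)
The plan is to prove a stronger, formula-level statement and then specialise it: for every formula $\varphi$ and every $b \in \mathbb{B}$,
\[ \semantics{\varphi}_3\ b \;=\; \Definitives(\semantics{\varphi}\ b) \;=\; \dclose(\semantics{\varphi}\ b). \]
The theorem then follows from the equivalent characterisation of definitive prefixes, $\dclose X = \{ t \mid \extensions t \cap \itraces \subseteq X \}$. For a finite $t$, the infinite extensions of $t$ are exactly the traces $tu$ with $u \in \itraces$. So $t \in \dclose(\semantics{\varphi}\ b)$ holds iff $tu \in \semantics{\varphi}\ b$ for all $u \in \itraces$, for both $b = \true$ and $b = \false$.

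To establish the displayed equation, I will use the isomorphism of Theorem~\ref{thm:bijection}. Every $\semantics{\varphi}_3\ b$ is definitive, as already observed for each operator. Hence it suffices to show $\Prop(\semantics{\varphi}_3\ b) = \semantics{\varphi}\ b$, i.e.\ $\semantics{\varphi}_3\ b \cap \itraces = \semantics{\varphi}\ b$. I prove this by structural induction on $\varphi$, simultaneously for both answers. The cases go as follows.
\begin{itemize}
\item $\top$ and negation are immediate, since $\Negaif_3$ and $\Negaif$ just swap the answers.
\item For conjunction and disjunction, I use the preservation theorem for $\Prop$. Intersections are preserved, and $\Prop(X \dunion Y) = \Prop(X) \cup \Prop(Y)$, which matches the union/intersection pattern of the two-valued operators.
\item For atoms, I use $\Prop(\dclose S) = \Prop(S)$, shown in the proof of Theorem~\ref{thm:bijection}. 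This gives $\Prop(\Propaifthree{a}\ \true) = \{ t \in \itraces \mid a \in t_0 \}$, and dually for $\false$.
\item For $\Next$, I use the simple fact $\Prop(\prepend X) = \{ t \in \itraces \mid \drop{t}{1} \in \Prop(X) \}$. This holds because the tail of an infinite trace is infinite, and every infinite trace is non-empty.
\end{itemize}

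The until case is the main obstacle. Write $f$ for the iterated map in the $\true$ clause of $\Untilaif_3$. Define $g(P) = \{ t \in \itraces \mid \drop{t}{1} \in P \} \cap \semantics{\varphi}\ \true$ on properties. By the $\Next$ and conjunction facts together with the induction hypothesis for $\varphi$, we get $\Prop(f(X)) = g(\Prop(X))$, and an inner induction on $k$ then yields $\Prop(f^k(X)) = g^k(\Prop(X))$. Another inner induction on $k$ gives the explicit form
\[ g^k(P) = \{ t \in \itraces \mid (\forall i<k.\ \drop{t}{i} \in \semantics{\varphi}\ \true) \wedge \drop{t}{k} \in P \}. \]
Here some care is needed because $f^{k+1}(x) = f^k(f(x))$ nests from the inside, so I will phrase the inductive step by peeling off the head of $t$. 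Then preservation of least upper bounds over $I = \mathbb{N}$, together with the induction hypothesis for $\psi$, turns $\Prop(\dbigunion_k f^k(\semantics{\psi}_3\ \true))$ into $\bigcup_k g^k(\semantics{\psi}\ \true)$. This is exactly $(\semantics{\varphi} \Untilaif \semantics{\psi})\ \true$.

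The $\false$ clause is dual. Take $g'(P) = \{ t \in \itraces \mid \drop{t}{1} \in P \} \cup \semantics{\varphi}\ \false$, for which $t \in g'^k(P)$ iff $\exists i<k.\ \drop{t}{i} \in \semantics{\varphi}\ \false$ or $\drop{t}{k} \in P$. Since $\Prop$ preserves intersections, intersecting over $k$ gives precisely $(\semantics{\varphi} \Untilaif \semantics{\psi})\ \false$.
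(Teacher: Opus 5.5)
Your proof is correct and follows the paper's argument, with the omitted step written out. The paper's one-line proof cites only the definitiveness of $\semantics{\varphi}_3\ b$, but it also needs $\semantics{\varphi}_3\ b \cap \itraces = \semantics{\varphi}\ b$, which is Theorem~\ref{thm:ltl3equiv}, proved right afterwards by the same structural induction you give (including the inner induction on $k$ for $\Until$); combining that with Theorem~\ref{thm:bijection} and $\dclose X = \{ t \mid \extensions t \cap \itraces \subseteq X \}$, as you do, completes the paper's proof.
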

\begin{proof}
	This follows directly from the definition of definitive sets, as $\semantics{\varphi}_3\ \true$ and $\semantics{\varphi}_3\ \false$ are both definitive.
\end{proof}
\noindent Theorem~\ref{thm:ltl3old} shows that our inductive semantics coincides with the original non-inductive semantics given for \ltlthree. If we view our semantics through the lens of the isomorphism in Theorem~\ref{thm:bijection}, however, we see that this semantics is also equivalent to the semantics of conventional LTL:
\begin{thm}[Equivalence to conventional LTL]\label{thm:ltl3equiv}
For all formulae $\varphi$:
\begin{itemize}
\item $\Prop (\semantics{\varphi} _3\ \true) = \semantics{\varphi}\ \true$
\item $\Prop (\semantics{\varphi} _3\ \false) = \semantics{\varphi}\ \false$
\end{itemize}
\end{thm}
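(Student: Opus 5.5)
The proof is by structural induction on $\varphi$, proving both items together. The induction hypothesis is needed for both answers at once, because $\Negaif_3$ swaps $\true$ and $\false$.

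Throughout, I use three commutation facts for $\Prop$. The first two come from the previous theorem: $\Prop$ preserves intersections, and $\Prop(\dbigunion_{i} X_i) = \bigcup_i \Prop(X_i)$. More basically, $\Prop(\dclose S) = \Prop(S)$ for any set of traces $S$, as shown in the proof of Theorem~\ref{thm:bijection}. The third is a small new lemma:
\[
\Prop(\prepend X) = \{t \in \itraces \mid \drop{t}{1} \in \Prop(X)\}.
\]
It holds because for infinite $t$ the tail $\drop{t}{1}$ is again infinite, and any infinite $t$ with $\drop{t}{1}\in X$ lies in $\prepend X$.

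The base and non-temporal cases are then direct.
\begin{itemize}
\item For $\top$: $\Prop(\traces)=\itraces$ and $\Prop(\emptyset)=\emptyset$.
\item For an atom $a$: $\Prop(\dclose\{t \mid t\neq\varepsilon \wedge a\in t_0\}) = \{t\in\itraces \mid a \in t_0\}$ by $\Prop\circ\dclose = \Prop$, and likewise for $\false$.
\item For negation: the answers are swapped in both semantics, so the claim is the induction hypothesis with the answers exchanged.
\item For $\wedge$ and $\vee$: use preservation of $\cap$ and of $\dunion$ (which becomes $\cup$).
\item For $\Next$: use the prepend lemma, which turns $\prepend(\semantics{\varphi}_3\,a)$ into exactly $(\Nextaif\semantics{\varphi})\,a$.
\end{itemize}

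The main obstacle is $\Until$, because of the iterated $f^k$. For the $\true$ answer, put $f(X) = \prepend X \cap \semantics{\varphi}_3\,\true$ and $g(P) = \{t\in\itraces\mid \drop{t}{1}\in P\}\cap\semantics{\varphi}\,\true$. The prepend lemma, preservation of $\cap$ and the induction hypothesis give $\Prop\circ f = g\circ\Prop$, and hence $\Prop(f^k(X)) = g^k(\Prop X)$ by induction on $k$. Preservation of definitive union then yields
\[
\Prop\bigl((\semantics{\varphi}_3 \Untilaif_3 \semantics{\psi}_3)\,\true\bigr) = \bigcup_k g^k(\semantics{\psi}\,\true).
\]
It remains to prove the closed form
\[
g^k(P) = \{t\in\itraces \mid (\forall i<k.\ \drop{t}{i}\in\semantics{\varphi}\,\true)\wedge \drop{t}{k}\in P\},
\]
again by induction on $k$. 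The unfolding $f^{k+1}(x)=f^k(f(x))$ puts the new step at the innermost position, so the index shifting has to be done carefully: use $\drop{(\drop{t}{1})}{i} = \drop{t}{i+1}$, or first prove $g^{k+1} = g\circ g^k$. Taking the union over $k$ gives exactly $(\semantics{\varphi}\Untilaif\semantics{\psi})\,\true$.

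The $\false$ answer is dual. Here $f(X) = \prepend X \dunion \semantics{\varphi}_3\,\false$ corresponds to $g(P) = \{t\in\itraces\mid\drop{t}{1}\in P\}\cup\semantics{\varphi}\,\false$, using preservation of $\dunion$. The closed form is
\[
g^k(P) = \{t\in\itraces\mid(\exists i<k.\ \drop{t}{i}\in\semantics{\varphi}\,\false)\vee\drop{t}{k}\in P\},
\]
and taking the intersection over $k$ gives $(\semantics{\varphi}\Untilaif\semantics{\psi})\,\false$.
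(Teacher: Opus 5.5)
Your proposal is correct and follows the paper's proof essentially step for step. Like the paper, you use simultaneous structural induction on $\varphi$ and the fact that $\Prop$ commutes with $\prepend$, intersection and definitive union (via $\Prop\circ\dclose=\Prop$); for $\Until$ you do an induction on $k$ showing $\Prop(f^k(X)) = g^k(\Prop(X))$, then a second induction identifying $\bigcup_k g^k$ (resp.\ $\bigcap_k g^k$) with the conventional clause, where your explicit closed forms for $g^k$ (valid for $P\subseteq\itraces$) just spell out the paper's ``another simple induction''.
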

\begin{proof}
The two statements are shown simultaneously by induction on $\varphi$:
\begin{itemize}[leftmargin=7em]
	\item[$\varphi = \top$:]	$\Prop(\traces) = \itraces$ by definition.
	\item[$\varphi = a$:]  Because $\Prop(\lightning S) = \Prop(S)$ as seen in the proof of Theorem~\ref{thm:bijection}, $\Prop(\Propaifthree{a} \true) = \Propaif{a}\ \true$ and likewise $\Prop(\Propaifthree{a} \false) = \Propaif{a}\ \false$. 
	\item[$\varphi = \neg \varphi'$:] Follows from inductive hypotheses. 
	\item[$\varphi = \varphi' \land \psi'$:] Follows from inductive hypotheses as $\Prop$ preserves greatest lower  and least upper bounds.
	\item[$\varphi = \Next \varphi'$:] Follows from inductive hypotheses as the prepend operator $\prepend$ commutes with $\Prop$.
	\item[$\varphi = \varphi' \Until \psi'$:] Because $\Prop$ commutes with $\prepend$ and preserves least upper and greatest lower bounds, we can show that $\Prop(\bigdunion_{k \in \mathbb{N}} f^k(\semantics{\psi'}_3 \, \true))$ where $f(X) = \prepend X \cap \semantics{\varphi}_3 \ \true$, is equal to $\bigcup_{k \in \mathbb{N}} g^k(\Prop(\semantics{\psi'}_3 \ \true))$ where  $g(X) = \prepend X \cap \Prop(\semantics{\varphi'}_3 \ \true)$ by induction on on the natural number $k$. By the inductive hypotheses, this is equal to $\bigcup_{k \in \mathbb{N}} g^k(\semantics{\psi'} \ \true)$ where  $g(X) = \prepend X \cap \semantics{\varphi'} \ \true$.   This can be shown by another simple induction to be equal to the original definition in the conventional LTL semantics $\{t \mid \exists k. (\forall i < k. \drop{t}{i} \in \semantics{\varphi'} \ \true) \wedge \drop{t}{k} \in \semantics{\psi'} \ \true \}$. The cases for the $\false$ answer are proved similarly.
\popQED
\end{itemize}
\end{proof}
\noindent Because of this equivalence theorem, we can now express the relationship between the set for $\true$ and the set for $\false$ in our \ltlthree semantics. In \ltlthree, while the two sets do not overlap, they are not perfect complements of each other as they were in conventional LTL, as definitive sets are not closed under complement. Instead, the $\false$ set is the definitive set corresponding to the \emph{linear-time temporal property} containing all infinite traces not in the $\true$ set. 
\begin{thm}[Excluded Middle]\label{thm:em}
For all formulae $\varphi$: $$ \semantics{\varphi}_3\ \true = \lightning(\itraces \setminus \semantics{\varphi}_3\ \false) \qquad \qquad \text{and} \qquad \qquad \semantics{\varphi}_3\ \false = \lightning(\itraces \setminus \semantics{\varphi}_3\ \true) $$	
\end{thm}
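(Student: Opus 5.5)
Both equations follow from the isomorphism of Theorem~\ref{thm:bijection} combined with Theorem~\ref{thm:ltl3equiv} and the fact that, in conventional LTL, the $\false$ set is the complement of the $\true$ set. We argue the first equation; the second is symmetric, with the roles of $\true$ and $\false$ swapped.

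First we observe that $\itraces \setminus \semantics{\varphi}_3\ \false = \itraces \setminus \Prop(\semantics{\varphi}_3\ \false)$. This holds because removing the finite traces of $\semantics{\varphi}_3\ \false$ makes no difference when subtracting from $\itraces$. By Theorem~\ref{thm:ltl3equiv}, $\Prop(\semantics{\varphi}_3\ \false) = \semantics{\varphi}\ \false$.

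Next we need $\itraces \setminus \semantics{\varphi}\ \false = \semantics{\varphi}\ \true$. This is immediate from Theorem~\ref{thm:ltlequiv}: for an infinite trace $t$, $t \in \semantics{\varphi}\ \false$ iff $t \nvDash \varphi$ iff $t \notin \semantics{\varphi}\ \true$. Hence the right-hand side equals $\lightning(\semantics{\varphi}\ \true)$, which is $\Definitives(\semantics{\varphi}\ \true)$.

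Finally, $\semantics{\varphi}_3\ \true$ is a definitive set, as established when the \ltlthree operators were defined. Theorem~\ref{thm:ltl3equiv} gives $\Prop(\semantics{\varphi}_3\ \true) = \semantics{\varphi}\ \true$, so the $\implies$ direction of Theorem~\ref{thm:bijection} yields $\semantics{\varphi}_3\ \true = \Definitives(\semantics{\varphi}\ \true)$. Chaining these equalities gives the first equation.

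No step is a real obstacle. The only care needed is in the first step: the set difference is taken inside $\itraces$ while $\semantics{\varphi}_3\ \false$ also contains finite traces, so one must check explicitly that these finite traces do not affect $\itraces \setminus \semantics{\varphi}_3\ \false$. Definitiveness of $\semantics{\varphi}_3\ \true$ for every $\varphi$ is taken from the earlier closure remarks, formally a routine induction on $\varphi$.
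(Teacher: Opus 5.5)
Your proposal is correct and follows the paper's proof essentially step for step. The paper chains $\semantics{\varphi}_3\ \true = \Definitives(\Prop(\semantics{\varphi}_3\ \true)) = \Definitives(\semantics{\varphi}\ \true) = \Definitives(\itraces\setminus\semantics{\varphi}\ \false) = \lightning(\itraces\setminus\semantics{\varphi}_3\ \false)$, using Theorems~\ref{thm:bijection} and~\ref{thm:ltl3equiv} together with the complement fact from Theorem~\ref{thm:ltlequiv}; these are exactly your ingredients, traversed in the opposite direction, and your check on the finite traces corresponds to the paper's step $\Prop(X) = X \cap \itraces$.
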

\begin{proof} It is a straightforward consequence of Theorem~\ref{thm:ltlequiv} that $\semantics{\varphi}\ \true = \itraces \setminus \semantics{\varphi}\ \false$ (*). Then:
\begin{displaymath}
	\begin{array}{lclr}
		 \semantics{\varphi}_3\ \true & =& \Definitives(\Prop ( \semantics{\varphi}_3\ \true)) & \text{(Theorem~\ref{thm:bijection})}\\
         & =& \Definitives(\semantics{\varphi}\ \true) & \text{(Theorem~\ref{thm:ltl3equiv})}\\
         & =& \Definitives(\itraces \setminus \semantics{\varphi}\ \false) & \text{(*)}\\	 
         & =& \Definitives(\itraces \setminus \Prop (\semantics{\varphi}_3\ \false)) & \text{(Theorem~\ref{thm:ltl3equiv})}\\
         & =& \Definitives(\itraces \setminus (\semantics{\varphi}_3\ \false \cap \itraces)) &\qquad \text{(Definition of $\Prop$)}\\         
         & =& \lightning (\itraces \setminus \semantics{\varphi}_3\ \false) & \text{ (Definition of $\Definitives$)}\\[-4ex]
	\end{array}
\end{displaymath}	
\end{proof}

\section{Formula Progression}
\label{sec:fp}

\emph{Formula progression} is a technique first introduced by Kabanza et al.~\cite{progress1,progress2} that evaluates a formula stepwise against states in a style reminiscent of operational semantics or the Brzozowski derivative. This technique was used by O'Connor {\&} Wickstr\"om~\cite{quickstrom} as the basis for their testing algorithm, and by Bauer {\&} Falcone~\cite{fpltl3fm} for decentralised monitoring of component-based systems.  Figure~\ref{fig:fp} gives an overview of formula progression rules for LTL\@. The judgement \plat{$\varphi \xrightarrow{\sigma} \psi$} states that, to prove $\varphi$, it suffices to prove $\psi$ for the tail of our trace if the head of our trace is $\sigma \in \Sigma$. Note that these rules are total and syntax-directed on the left-hand formula $\varphi$. This means that these rules taken together constitute a definition of a total function that takes $\varphi$ and $\sigma$ as input and produces $\psi$ as output. We generalise this notation to finite prefixes, so that for a finite trace $t = \sigma_0\dots\sigma_n$, the notation $\varphi \xrightarrow{t} \psi$ just means $\varphi \xrightarrow{\sigma_0} \cdots \xrightarrow{\sigma_n} \psi$. Repeated application of these rules, however, can lead to exponential blowup in the size of the formula. While both O'Connor {\&} Wickstr\"om~\cite{quickstrom} and Bauer {\&} Falcone~\cite{fpltl3fm} report that interleaving this progression with formula simplification at each step keeps the formulae tractable for most practical use cases, Ro\c{s}u {\&} Havelund \cite{rosu} warn that pathological exponential cases still exist.

\begin{figure}
\vspace{-2ex}
\begin{gather*}
\boxed{\varphi \xrightarrow{\sigma} \psi}\\
\inferrule{ }{\top \xrightarrow{\sigma} \top} \qquad 
\inferrule{a \in \sigma}{a \xrightarrow{\sigma} \top} \qquad 
\inferrule{a \notin \sigma}{a \xrightarrow{\sigma} \bot} \qquad 
 \inferrule{ }{\Next \varphi \xrightarrow{\sigma} \varphi} \qquad
\inferrule{ \varphi \xrightarrow{\sigma} \varphi' \\
              \psi \xrightarrow{\sigma} \psi' }{\varphi \lor \psi \xrightarrow{\sigma} \varphi' \lor \psi'} \\ 
 \inferrule{ \varphi \xrightarrow{\sigma} \varphi' }{\neg \varphi \xrightarrow{\sigma} \neg \varphi'}\qquad 
\inferrule{ \varphi \xrightarrow{\sigma} \varphi' \\
              \psi \xrightarrow{\sigma} \psi' }{\varphi \land \psi \xrightarrow{\sigma} \varphi' \land \psi'} \qquad 
\inferrule{ \psi \xrightarrow{\sigma} \psi' \\ \varphi \xrightarrow{\sigma} \varphi'}{\varphi \Until \psi \xrightarrow{\sigma} \psi' \lor (\varphi' \land (\varphi \Until \psi)) }
\end{gather*}
\caption{Rules for formula progression}
\label{fig:fp}	
\end{figure}

Bauer {\&} Falcone~\cite{fpltl3fm} state that formula progression can serve as an \emph{alternative semantics} for \ltlthree on finite traces, where a formula $\varphi$ is considered definitively true for a finite trace $t$ iff $\varphi \xrightarrow{t} \top$, definitively false iff $\varphi \xrightarrow{t} \bot$, and is unknown otherwise. While it goes unmentioned in their paper, here the implicit simplification steps are not just a performance optimisation, but are vital to ensure that the semantics given via formula progression is complete with respect to the standard \ltlthree semantics. To see why, consider the formula $\Eventually a$. Let $\sigma_a$ be a state where $a \in \sigma_a$. Then the formula $\Eventually a$ should be considered definitively true for the trace consisting of just $\sigma_a$. The formula generated by our formula progression rules, however, would be $\top \lor (\top \land \Eventually a)$, which yields the desired formula $\top$ only after logical simplifications are applied. While in this case, the simplifications required are just identities of propositional logic, in general such straightforward simplifications alone are insufficient. For example, consider the formula $(\Next a) \lor (\Eventually \neg a)$. According to the semantics of \ltlthree presented above, this formula should be considered definitively true for the empty trace $\varepsilon$, as it is a tautology. Temporally local simplifications such as those used by O'Connor {\&} Wickstr\"om~\cite{quickstrom}, however, would not be able to determine that this formula is a tautology until after one state has been observed. Therefore, in order for formula progression to align correctly with the semantics of \ltlthree, the simplification must transform \emph{all} tautologies into $\top$ and \emph{all} absurdities into $\bot$. A simple, although slow way to implement such a simplifier would be to convert both the formula and its negation into B\"uchi automata, and perform cycle detection to check for emptiness. For our development, we abstract away from such syntactic simplification procedures by working only on the level of our model-based semantics. As can be seen in our Theorem~\ref{thm:fp3} given below, we do not seek a specific syntactic tautology $\top$ or absurdity $\bot$, but rather refer to any formula with trivial semantics. A purely syntactic characterisation, by contrast, would require a full accounting of the simplification procedure, which is outside the scope of our development here.

The rules given in Figure~\ref{fig:fp} operate on one state at a time, whereas our semantics are on the level of entire traces. Therefore, in order to show soundness and completeness (for finite traces) of our formula progression rules with respect to our semantics, we must first prove two lemmas which relate a single step of formula progression to our semantics.

The first lemma states that for one step of formula progression $\varphi \xrightarrow{\sigma} \varphi'$, prepending $\sigma$ to the traces that satisfy/refute the \emph{output} formula $\varphi'$ yields traces that satisfy (resp.\ refute) the \emph{input} formula $\varphi$.
  
 \begin{lem}\label{thm:fp1} Let $\varphi$ and $\varphi'$ be formulae and $\sigma$ be a state such that $\varphi \xrightarrow{\sigma} \varphi'$. Then:
 \begin{itemize}\item
 	$ \prepend (\semantics{\varphi'}_3\ \true) \cap \{ t \mid t_0 = \sigma \} \subseteq \semantics{\varphi}_3\ \true  $
 	\item 
 	 	$ \prepend (\semantics{\varphi'}_3\ \false) \cap \{ t \mid t_0 = \sigma \} \subseteq \semantics{\varphi}_3\ \false  $
 \end{itemize}
 \end{lem}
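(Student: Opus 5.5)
The plan is to avoid working directly with the iterated definitive-union and intersection constructions of $\Untilaif_3$. Instead, I will use definitiveness to transport the question to conventional LTL on infinite traces, and prove a purely classical one-step lemma there.

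\textbf{Step 1: reduce to conventional semantics.} Let $t \in \prepend(\semantics{\varphi'}_3\ \true)$ with $t_0 = \sigma$. Then $t$ is non-empty, so $t = \sigma u$ with $u \in \semantics{\varphi'}_3\ \true$. Since $\semantics{\varphi}_3\ \true$ is definitive, i.e.\ $\semantics{\varphi}_3\ \true = \lightning(\semantics{\varphi}_3\ \true)$, it suffices to show that every infinite extension $\sigma u v$ lies in $\semantics{\varphi}_3\ \true$. By Theorem~\ref{thm:ltl3equiv} this is the same as showing $\sigma u v \in \semantics{\varphi}\ \true$. The same theorem, together with definitiveness of $\semantics{\varphi'}_3\ \true$, gives that every extension of $u$ is in $\semantics{\varphi'}_3\ \true$. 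Hence $uv \in \semantics{\varphi'}\ \true$ for every $v$ making $uv$ infinite. This covers both the finite and the infinite case of $u$; if $u$ is infinite, take $v = \varepsilon$. The $\false$ case is identical with $\false$ in place of $\true$.

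\textbf{Step 2: the classical one-step lemma.} It remains to prove the following: if $\varphi \xrightarrow{\sigma} \varphi'$, then for every infinite $w$, $\sigma w \models \varphi \iff w \models \varphi'$. Both bullet points of the statement follow from this via Theorem~\ref{thm:ltlequiv}, because the $\false$ set is the complement of the $\true$ set in the conventional semantics. I will prove the biconditional rather than just one implication, so that the negation case goes through immediately. The proof is by induction on the derivation of $\varphi \xrightarrow{\sigma} \varphi'$, equivalently on $\varphi$, since the rules are syntax-directed.

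The base cases $\top$, $a \in \sigma$ and $a \notin \sigma$ are immediate from $(\sigma w)_0 = \sigma$. The case $\Next$ is immediate since $\drop{(\sigma w)}{1} = w$. The cases $\neg$, $\land$ and $\lor$ follow directly from the inductive hypotheses.

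\textbf{Main obstacle: the until case.} Here $\varphi \Until \psi \xrightarrow{\sigma} \psi' \lor (\varphi' \land (\varphi \Until \psi))$. I will use the standard expansion law for strong until on infinite traces:
\[ \sigma w \models \varphi \Until \psi \iff \sigma w \models \psi \ \text{or}\ (\sigma w \models \varphi \ \text{and}\ w \models \varphi \Until \psi). \]
This is proved by splitting on whether the witness index $k$ in the semantics of $\Until$ is $0$ or $k+1$, and reindexing $\drop{(\sigma w)}{k+1} = \drop{w}{k}$. The inductive hypotheses for $\psi$ and $\varphi$ then rewrite $\sigma w \models \psi$ as $w \models \psi'$ and $\sigma w \models \varphi$ as $w \models \varphi'$, which yields exactly $w \models \psi' \lor (\varphi' \land (\varphi \Until \psi))$. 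I expect this index bookkeeping to be the only non-trivial part.
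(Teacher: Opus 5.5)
Your proof is correct, but it takes a different route from the paper.

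The paper proves the lemma by structural induction on $\varphi$ directly in the definitive-set semantics:
\begin{itemize}
\item the $\land$/$\lor$ cases use distributivity of the lattice $\mathcal{D}$ and the fact that $\prepend$ distributes over $\cap$ and $\dunion$;
\item the $\Until$ case unfolds $\bigdunion_k f^k(\cdot)$ and $\bigcap_k f^k(\cdot)$ by one step.
\end{itemize}
You instead use definitiveness and Theorem~\ref{thm:ltl3equiv} to move the question to infinite traces. After that, the only induction you need is the textbook one-step biconditional $\sigma w \models \varphi \iff w \models \varphi'$ for conventional LTL, where the until case is just the standard expansion law.

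Your route has three advantages. First, it never reasons about the definitive union; that is the delicate part of the paper's induction, since commuting $\prepend$ with $\dunion$ and unfolding $\bigdunion$ need care. Second, because you prove a biconditional, running the same argument backwards also gives Lemma~\ref{thm:fp2}. In fact you get the equality $\prepend(\semantics{\varphi'}_3\ \true) \cap \{t \mid t_0 = \sigma\} = \semantics{\varphi}_3\ \true \cap \{t \mid t_0 = \sigma\}$, and likewise for $\false$. Third, it makes explicit that correctness of formula progression is really a fact about conventional LTL, transported through the isomorphism.

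The paper's route has its own merit. It shows that progression interacts well with the compositional \ltlthree operators themselves, rather than leaning on the equivalence theorem wholesale. That is the kind of argument one would need in settings without such a clean isomorphism, such as the four-set RV-LTL semantics the authors sketch.

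Two small remarks on your Step~1. The condition $t_0 = \sigma$ is what forces $t \neq \varepsilon$, and the lemma would fail without it. Also, extension-closure of $\semantics{\varphi'}_3\ \true$ comes from definitiveness alone; Theorem~\ref{thm:ltl3equiv} is only needed to identify its infinite members with $\semantics{\varphi'}\ \true$.
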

\begin{proof}
	The two statements are shown simultaneously by structural induction on the formula $\varphi$ (which, as our rules are syntax directed, uniquely determines the output formula $\varphi'$). The base cases for $\varphi = \top$ and $\varphi = a$ as well as the inductive cases for the next operator $\Next$ follow directly from definitions. Of the other inductive cases, the cases for conjunction and disjunction require the use of the distributive properties of the lattice of definitive sets, as well as the fact that the prepend operator $\prepend$ distributes over intersection and definitive union. The cases for negation follow directly from the inductive hypotheses, whereas the cases for the until operator $\Until$ require unfolding of the big unions and intersections in the definition of the semantic operator $\Untilaif\!_3$ by one step.
\end{proof}
\noindent The second lemma states that those traces that satisfy the \emph{input} formula $\varphi$ and begin with the state $\sigma$ will have tails that satisfy the \emph{output} formula $\varphi'$.
\begin{lem}\label{thm:fp2}  Let $\varphi$ and $\varphi'$ be formulae and $\sigma$ be a state such that $\varphi \xrightarrow{\sigma} \varphi'$. Then:
 \begin{itemize}\item
 	$ \semantics{\varphi}_3\ \true\ \cap \{ t \mid t_0 = \sigma \} \subseteq \prepend (\semantics{\varphi'}_3\ \true)  $
 	\item 
 	 	$ \semantics{\varphi}_3\ \false\ \cap \{ t \mid t_0 = \sigma \} \subseteq \prepend (\semantics{\varphi'}_3\ \false)  $
 \end{itemize}
\end{lem}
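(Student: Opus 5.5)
Rather than redo a structural induction on definitive sets, I would reduce Lemma~\ref{thm:fp2} to a statement about conventional LTL on infinite traces and then lift it using the fact that $\semantics{\varphi'}_3\ \true$ and $\semantics{\varphi'}_3\ \false$ are definitive.

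\textbf{Step 1: one-step progression on infinite traces.} I would first prove the auxiliary claim that whenever $\varphi \xrightarrow{\sigma} \varphi'$, for every infinite trace $w$ we have
\[\sigma w \in \semantics{\varphi}\ \true \iff w \in \semantics{\varphi'}\ \true.\]
By Theorem~\ref{thm:ltlequiv}, the analogous statement for $\false$ follows by complementation. The proof is a structural induction on $\varphi$, which is legitimate because the rules are syntax-directed.
\begin{itemize}
\item The cases $\top$, $a$ and $\Next$ are immediate from Figure~\ref{fig:ltlsemantics}.
\item The cases $\neg$, $\land$ and $\lor$ follow pointwise from the induction hypotheses.
\item The case $\varphi \Until \psi$ uses the standard expansion law: on infinite traces, $\sigma w \models \varphi \Until \psi$ iff $\sigma w \models \psi$, or both $\sigma w \models \varphi$ and $w \models \varphi \Until \psi$. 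This is obtained by splitting on whether the witness index $k$ is $0$ or positive. The induction hypotheses for $\psi$ and $\varphi$ then turn the right-hand side into $w \models \psi' \lor (\varphi' \land (\varphi \Until \psi))$.
\end{itemize}

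\textbf{Step 2: lifting to definitive sets.} Take $t \in \semantics{\varphi}_3\ \true$ with $t_0 = \sigma$. Then $t$ is nonempty, so $t = \sigma u$, and we must show $u \in \semantics{\varphi'}_3\ \true$. Let $v$ be any infinite trace. Since $\semantics{\varphi}_3\ \true$ is definitive, the extension $\sigma u v$ also lies in it, and it is infinite. Theorem~\ref{thm:ltl3equiv} then gives $\sigma uv \in \semantics{\varphi}\ \true$, and Step~1 gives $uv \in \semantics{\varphi'}\ \true$. Applying Theorem~\ref{thm:ltl3equiv} once more, $uv \in \semantics{\varphi'}_3\ \true$ for every infinite $v$. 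So every infinite extension of $u$ lies in this definitive set, and by the characterisation $\dclose X = \{ t \mid \extensions t \cap \itraces \subseteq X\}$ (or equivalently Theorem~\ref{thm:ltl3old}), $u \in \semantics{\varphi'}_3\ \true$. If $u$ is itself infinite, this is already the case $v = \varepsilon$. Hence $t \in \prepend(\semantics{\varphi'}_3\ \true)$. The $\false$ case is identical, using the $\false$ half of Theorem~\ref{thm:ltl3equiv} and of Step~1.

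\textbf{Main obstacle.} I expect the main difficulty to be the $\Until$ case of Step~1, in particular making the index shift between $\drop{(\sigma w)}{k+1}$ and $\drop{w}{k}$ clean. By contrast, the lifting in Step~2 is routine given the isomorphism of Theorem~\ref{thm:bijection}. The same argument also yields Lemma~\ref{thm:fp1} as a by-product.
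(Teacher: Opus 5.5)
Your proof is correct, but it takes a different route from the paper. The paper argues by a direct structural induction on $\varphi$ inside the lattice of definitive sets:
\begin{itemize}
\item the $\land$ and $\lor$ cases use distributivity of that lattice and of $\prepend$ over $\cap$ and $\dunion$;
\item the $\Until$ case unfolds the $\bigdunion_k f^k$ and $\bigcap_k f^k$ in the definition of $\Untilaif_3$ by one step.
\end{itemize}
You instead prove the classical one-step progression law for conventional LTL on infinite traces. There every case is pointwise and $\Until$ is just the expansion law. You then transport it using extension-closure of definitive sets together with Theorem~\ref{thm:ltl3equiv} (or Theorem~\ref{thm:ltl3old} when $u$ is finite). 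This gives, for $t = \sigma u$, the biconditional $\sigma u \in \semantics{\varphi}_3\ \true \iff u \in \semantics{\varphi'}_3\ \true$, and likewise for $\false$.

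Your route buys three things. First, it never reasons about definitive unions. These are not pointwise, since they can contain finite traces lying in neither component, and they are where the direct proof needs genuine lattice lemmas. Second, it delivers Lemmas~\ref{thm:fp1} and~\ref{thm:fp2} in one stroke, in exactly the form Theorem~\ref{thm:fp3} consumes. Third, it illustrates the paper's own thesis that \ltlthree is a transport of conventional LTL.

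The price is that all the work on $\Untilaif_3$ and $\dunion$ is moved into Theorem~\ref{thm:ltl3equiv}, on which you depend. The paper's argument, by contrast, stays entirely within the inductive \ltlthree semantics. It could therefore in principle carry over to settings where the semantic sets are not determined by their infinite traces, such as the four-set semantics suggested for RV-LTL.
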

\begin{proof}
As with Theorem~\ref{thm:fp1}, the two statements are shown simultaneously by structural induction on the formula $\varphi$. The base cases and the cases for the next operator $\Next$ are shown just by unfolding definitions, the cases for conjunction and disjunction are shown by use of distributive properties including those of the prepend operator $\prepend$, negation proceeds directly from the induction hypotheses, and the until operator $\Until$ requires unfolding of the semantic operator $\Untilaif\!_3$ by one step.
\end{proof}
\noindent By combining these two lemmas, we can inductively prove a theorem that relates formula progression to our semantics on the level of entire finite traces. This resembles the informal definition of formula progression semantics given by Bauer {\&} Falcone~\cite{fpltl3fm}, but with the syntactic requirement that the ultimate formula be $\top$ or $\bot$ replaced by a semantic requirement that it has trivial semantics.
\begin{thm}\label{thm:fp3} Let $t \in \ftraces$ be a finite trace. Then, for all formulae $\varphi$ and $\varphi'$ where $\varphi \xrightarrow{t} \varphi'$:
\begin{itemize}
\item  $t \in \semantics{\varphi}_3\ \true$ if and only if $\semantics{\varphi'}_3\ \true = \traces$.
\item  $t \in \semantics{\varphi}_3\ \false$ if and only if $\semantics{\varphi'}_3\ \false = \traces$.	
\end{itemize}	
\end{thm}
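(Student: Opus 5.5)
I will prove the theorem by induction on the length of the finite trace $t$, generalising over $\varphi$ (and hence $\varphi'$), and proving the $\true$ and $\false$ statements in parallel; the two arguments are identical after swapping answers, so I describe only the $\true$ case. Lemma~\ref{thm:fp1} and Lemma~\ref{thm:fp2} supply the single-step relationship between semantics and progression, and the induction chains these steps together.

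\textbf{Base case} $t=\varepsilon$. Here $\varphi' = \varphi$, so I must show $\varepsilon \in \semantics{\varphi}_3\ \true$ iff $\semantics{\varphi}_3\ \true = \traces$. The direction from right to left is trivial. For the forward direction, $\semantics{\varphi}_3\ \true$ is definitive, and any extension of a definitive prefix is again a definitive prefix. Since every trace extends $\varepsilon$, the whole of $\traces$ lies in the set.

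\textbf{Inductive step} $t = \sigma u$. Here $\varphi \xrightarrow{\sigma} \psi \xrightarrow{u} \varphi'$ for the unique $\psi$ determined by the syntax-directed rules. By the induction hypothesis applied to $u$ and $\psi$, it suffices to show
\[ \sigma u \in \semantics{\varphi}_3\ \true \iff u \in \semantics{\psi}_3\ \true. \]
\begin{itemize}
\item[($\Rightarrow$)] Note that $\sigma u$ lies in $\{t' \mid t'_0=\sigma\}$. Lemma~\ref{thm:fp2} then gives $\sigma u \in \prepend(\semantics{\psi}_3\ \true)$, that is, $u \in \semantics{\psi}_3\ \true$.
\item[($\Leftarrow$)] If $u \in \semantics{\psi}_3\ \true$, then $\sigma u \in \prepend(\semantics{\psi}_3\ \true) \cap \{t' \mid t'_0=\sigma\}$. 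Lemma~\ref{thm:fp1} then gives $\sigma u \in \semantics{\varphi}_3\ \true$.
\end{itemize}

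The main point needing care is bookkeeping rather than mathematics. First, the generalised progression $\varphi \xrightarrow{t}\varphi'$ must be unfolded from the front as $\varphi\xrightarrow{\sigma}\psi\xrightarrow{u}\varphi'$. This decomposition is legitimate because the one-step relation is a total function, so $\psi$ is uniquely determined. Second, the induction must quantify over all $\varphi$, since the intermediate formula $\psi$ changes at each step. Everything else is supplied directly by the two lemmas and the upward closure of definitive sets.
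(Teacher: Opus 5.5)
Your proposal is correct and follows the paper's proof essentially step for step. Both use induction on the length of $t$ with $\varphi$ generalised. The base case uses upward closure of the definitive set $\semantics{\varphi}_3\ \true$. The inductive step is reduced to $\sigma u \in \semantics{\varphi}_3\ \true \iff u \in \semantics{\psi}_3\ \true$, with Lemma~\ref{thm:fp2} giving the forward direction and Lemma~\ref{thm:fp1} the backward one.
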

\begin{proof}
By induction on the length of the trace $t$ (where $\varphi$ and $\varphi'$ are kept arbitrary). The second statement for $\false$ is proved identically to the first for $\true$, so we present the proof only for $\true$ here.
	\begin{description}
	\item[Base Case $(t = \varepsilon)$] It suffices to show that $\varepsilon \in \semantics{\varphi}_3\ \true$ iff $\semantics{\varphi}_3\ \true = \traces$. Because $\semantics{\varphi}_3\ \true$ is a definitive set, and any extension of a definitive prefix is also a definitive prefix, as $\varepsilon$ is in $\semantics{\varphi}_3\ \true$, we can conclude that all traces (i.e.\ extensions of $\varepsilon$) are in $\semantics{\varphi}_3\ \true$. The reverse direction of the iff is straightforward.
	\item[Inductive Case $(t = \sigma u)$] We know that $\varphi_0 \xrightarrow{\sigma} \varphi \xrightarrow{u} \varphi'$ and have the inductive hypothesis that $u \in \semantics{\varphi}_3\ \true\linebreak[2] \iff \semantics{\varphi'}_3\ \true = \traces$. We must show that $\sigma u \in \semantics{\varphi_0}_3\ \true \iff \semantics{\varphi'}_3\ \true = \traces$. Therefore, by the inductive hypothesis, it suffices to show $\sigma u \in \semantics{\varphi_0}_3\ \true \iff u \in \semantics{\varphi}_3\ \true$. Showing each direction separately:
	\begin{description}
	\item[$\implies$] By Theorem~\ref{thm:fp2} we can conclude that $\sigma u \in \prepend (\semantics{\phi}_3\ \true)$ and thus that $u \in \semantics{\varphi}_3\ \true$ by the definition of the prepend operator $\prepend$. 
	\item[$\impliedby$] By the definition of the prepend operator $\prepend$ we can conclude that $\sigma u \in \prepend(\semantics{\varphi}_3\ \true)$ and thus that $\sigma u \in \semantics{\varphi_0}_3\ \true$ by Theorem~\ref{thm:fp1}. 
\popQED
	\end{description}
\end{description}	
\end{proof}
\noindent Theorem~\ref{thm:fp3} is both a \emph{soundness} and \emph{completeness} proof for formula progression semantics with respect to our model-based semantics, up to finite traces. Soundness here means that a formula $\varphi$ will only evaluate in formula progression to a tautology for a trace $t$ when $t$ is in $\semantics{\varphi}_3\ \true$, and likewise will only evaluate to an absurdity when $t$ is in $\semantics{\varphi}_3\ \false$. This is the $\impliedby\!\!$ direction of the iff in Theorem~\ref{thm:fp3}. Completeness (or adequacy) up to finite traces means that all finite prefixes that definitively confirm the formula will evaluate in formula progression to a tautology, and all finite prefixes that definitively refute the formula will evaluate to an absurdity. This is the $\implies\!\!$ direction of the iff in Theorem~\ref{thm:fp3}.

The progressed formula evaluating to a tautology or an absurdity should be understood purely as a semantic description of when progression has reached a formula with trivial semantics, rather than as an implementation prescription for an explicit universality or emptiness check. Any practical implementation must therefore balance completeness with respect to this characterisation against efficiency. This trade-off can be addressed either by employing complete but potentially expensive procedures, or by adopting tractable but incomplete syntactic simplification techniques~\cite{quickstrom}.

\section{Properties and Monitorability}\label{sec:properties}
\subsection{Linear-time Temporal Properties}
Linear-time temporal properties can be broadly categorised into \emph{safety} properties, which state that something ``bad'' does not happen during execution, and \emph{liveness} properties, which state that something ``good'' eventually occurs during execution~\cite{Lam77}. Alpern {\&} Schneider~\cite{alpernschneider} formalise Lamport's informal concepts of safety and liveness properties by defining a safety property as one that can be definitely refuted by a finite prefix of a trace, and a liveness property $P$ as one such that any finite trace can be extended to a member of $P$. They define a topological space where safety properties are those sets that are (limit-)closed and liveness properties are those sets that are dense. We also see in later work~\cite{kupfermanvardi,quant} the concept of \emph{guarantee} (or \emph{co-safety}) properties and \emph{morbidity}  (or \emph{co-liveness}) properties, the complements of safety and liveness properties respectively. A \emph{guarantee} property can always be definitively \emph{confirmed} by a finite prefix of a trace,\footnote{Thus, \emph{guarantee} could be seen as an alternative formalisation of Lamport's informal concept of a liveness property~\cite{Lam77}, different from the standard formalisation of Alpern {\&} Schneider~\cite{alpernschneider}. This formalisation of liveness is employed, e.g., by van Glabbeek~\cite{vG10}.} whereas any finite prefix can be extended in such a way as to \emph{refute} a given \emph{morbidity} property.

Our definitive sets include those finite prefixes that can confirm (or refute) the property, enabling us to express these insights about finite prefixes directly. We shall use this to define a new operator on properties, which we call the \emph{guarantee kernel}, which, when viewed as an interior operator, gives rise to an equivalent topology to that of Alpern {\&} Schneider~\cite{alpernschneider}, enabling us to re-prove their classic result---that every property is the intersection of a safety and liveness property---by other means. These definitions also enable precise definitions of \emph{monitorability} classes, which we discuss in Section~\ref{sec:monitorability}.

\subsubsection{Guarantee and Safety Properties}

\begin{definition}[Guarantee kernel]
	The \emph{guarantee kernel} of a property $P$, written $\guar{P}$, is the set of all infinite traces that are extensions of finite definitive prefixes of $P$:
	$$
	\guar{P} \;\triangleq\; \extensions(\dclose{}\!P \cap \ftraces) \cap \itraces
	$$
	Intuitively, traces in the guarantee kernel $\guar{P}$ are those traces $t \in P$ for which membership in $P$ can be confirmed after observing merely a finite prefix of $t$. This means that $\guar{P}$ is the largest subset of $P$ that is a guarantee property. Thus, we can characterise guarantee properties using the guarantee kernel:
	$$
	P\ \text{is a guarantee property}\;\; \longleftrightarrow\;\; \guar{P} = P
	$$
	The guarantee kernel is a \emph{kernel operator}, that is, it is monotone ($P \subseteq R$ implies $\guar{P} \subseteq \guar{R}$), it is idempotent ($\guar{\guar{P}} = \guar{P}$), and it is a subset of the
	original property ($\guar{P} \subseteq P$). 
	
	\begin{thm}\label{thm:gkint}
		The guarantee kernel distributes over binary intersection, i.e., $\guar{P} \cap \guar{R} = \guar{P \cap R}$.
	\end{thm}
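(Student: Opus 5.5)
We prove the two inclusions separately. The inclusion $\guar{P \cap R} \subseteq \guar{P} \cap \guar{R}$ is immediate from monotonicity of the guarantee kernel: since $P \cap R \subseteq P$ and $P \cap R \subseteq R$, we get $\guar{P\cap R} \subseteq \guar{P}$ and $\guar{P \cap R}\subseteq \guar{R}$. Alternatively, it follows directly from the fact, listed earlier, that $\dclose$ distributes over intersection: $\dclose(P\cap R) \subseteq \dclose P$, and $\extensions$ and the intersections with $\ftraces$ and $\itraces$ are monotone.

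For the reverse inclusion, take an infinite trace $t \in \guar{P} \cap \guar{R}$. By definition there are finite traces $u, v \in \ftraces$ with $u \in \dclose P$, $v \in \dclose R$, and $t \in \extensions u \cap \extensions v$. Both $u$ and $v$ are prefixes of the same trace $t$, so they are comparable: one is a prefix of the other. Without loss of generality $u$ is a prefix of $v$; let $w$ be the longer of the two, i.e.\ $w = v$. Then $w$ is finite and $t \in \extensions w$.

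Since $w \in \extensions u$ and any extension of a definitive prefix is again a definitive prefix, we have $w \in \dclose P$. Also $w = v \in \dclose R$. Hence $w \in \dclose P \cap \dclose R = \dclose(P \cap R)$ by distributivity of $\dclose$ over intersection. As $w$ is finite and $t \in \extensions w \cap \itraces$, it follows that $t \in \guar{P\cap R}$.

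The only step needing care is the observation that two prefixes of a single trace are ordered by the prefix relation. This is what lets us take the longer one as a common finite definitive prefix, and it explains why the argument works only for finite (here binary) intersections and not for arbitrary ones. Everything else is unfolding definitions and citing the closure properties of $\dclose$ stated at the start of Section~\ref{sec:dps}.
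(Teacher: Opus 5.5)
Your proof is correct and follows the paper's argument: monotonicity for $\guar{P \cap R} \subseteq \guar{P} \cap \guar{R}$, and for the converse you use that two prefixes of the same infinite trace are comparable, so the longer one is definitive for both $P$ and $R$ by extension-closure, hence for $P \cap R$. The only difference is that you cite distributivity of $\dclose$ over intersection explicitly, where the paper leaves that step implicit.
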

	\begin{proof} Showing each direction separately:
	\begin{description}
	\item[$\guar{P} \cap \guar{R} \subseteq \guar{P \cap R}$]: Let $u$ be an infinite trace in $\guar{P} \cap \guar{R}$. Then there must exist a definitive finite prefix for $P$, $t_P \in \prefixes{u}$. Similarly there must exist a definitive finite prefix for $R$, $t_R \in \prefixes{u}$. Because both $t_P$ and $t_R$ are prefixes of $u$, either $t_P \in \prefixes{t_R}$ or $t_R \in \prefixes{t_P}$. If $t_P\in \prefixes{t_R}$, then $t_R$ must also be a definitive prefix for $P$ as definitive sets are closed under extension, so $t_R$ is definitive finite prefix for $P \cap R$ and thus $u \in \guar{P \cap R}$. Likewise for $t_P$ and $R$ when $t_R \in \prefixes{t_P}$.
		\item[$\guar{P} \cap \guar{R} \supseteq \guar{P \cap R}$]: Straightforward as the guarantee kernel is monotone.\qedhere
	\end{description}
	\end{proof}
	
	\begin{counterexample}[Guarantee kernel does not distribute over union]
 		Consider the properties $\Always \Eventually a$ and $\Eventually \Always \neg a$. The guarantee kernel of each is $\emptyset$, but their union is trivially true, thus having a guarantee kernel of $\Sigma^\omega$.
	\end{counterexample}
\end{definition}

\noindent As a safety property is just the complement of a guarantee property,
we can characterise safety properties analogously\footnote{Throughout this paper, the complement notation is only ever applied to \emph{properties}, not definitive sets. So $\compl{P} \triangleq \itraces \setminus P$.}: $$
	P\ \text{is a safety property}\;\; \longleftrightarrow\;\; \guar{\compl{P}} = \compl{P}
	$$
\begin{definition}[Safety closure]
	The \emph{safety closure} of a property $P$, written $\safety{P}$, is the set of all infinite traces which cannot be shown \emph{not} to be in $P$ by examining only a finite prefix:
	$$
		\safety{P} \; \triangleq \; \compl{(\guar{\compl{P}})}
	$$
	It follows that $\safety{P}$ is the smallest superset of $P$ that is a safety property. Thus, we can restate our above characterisation of safety properties as:
	$$
	P\ \text{is a safety property}\;\; \longleftrightarrow\;\; \safety{P} = P
	$$
	As the dual of the guarantee kernel, the safety closure is a \emph{closure operator}. That is, it is monotone ($P \subseteq R$ implies $\safety{P} \subseteq \safety{R}$), it is idempotent ($\safety{\safety{P}} = \safety{P}$), and it is a superset of the
	original property ($\safety{P} \supseteq P$). This operator coincides with the closure operator used to define the topology in Alpern {\&} Schneider~\cite{alpernschneider}. Thus, as with Alpern {\&} Schneider, safety properties are closed sets in a topology, and therefore guarantee properties are the open sets, and the guarantee kernel is the \emph{interior operator} for the topology.
\end{definition}
\begin{thm}[The space of properties is metrisable]
Let $P$ be a property, with the metric:
$$
d(t,u) = \begin{cases} 0 & \text{if}\ t = u \\ 2^{-\sup\{\ell\, \mid\, \forall i < \ell.\,  t_i = u_i \}} & \text{if}\ t \neq u \end{cases}
$$
Then $P$ is a guarantee property iff $P$ is open in the metric topology, i.e.\ $\forall t \in P.\ \exists \varepsilon > 0.\ \{ u \mid d(t,u) < \varepsilon \} \subseteq P $.
\end{thm}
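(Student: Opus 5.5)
The plan is to reduce both sides to statements about finite common prefixes. First, I would pin down the metric: for distinct infinite traces $t \neq u$, the set $\{\ell \mid \forall i<\ell.\ t_i=u_i\}$ is bounded, and its supremum is the length $n$ of the longest common prefix. Hence $d(t,u)=2^{-n}$. The key observation is that for any $n\in\mathbb{N}$,
$$\{u \in \itraces \mid d(t,u) < 2^{-n}\} \;=\; \extensions(t_0\cdots t_n)\cap\itraces,$$
so open balls are exactly the sets of infinite extensions of finite prefixes of $t$. Every $\varepsilon>0$ can be shrunk to some $2^{-n}$, so only these balls matter. Triangle-inequality and metric axioms are not needed for the stated equivalence, since only the ball condition is used.

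Second, I would rewrite the guarantee side using the characterisation $P$ guarantee iff $\guar{P}=P$. Since $\guar{P}\subseteq P$ always holds, this says that every $t\in P$ lies in $\guar{P}$. By the definition of the guarantee kernel, that means every $t\in P$ has a finite prefix $w\in\dclose P\cap\ftraces$. Using the equivalent form $\dclose X=\{w\mid \extensions w\cap\itraces\subseteq X\}$, this in turn means every $t\in P$ has a finite prefix $w$ with $\extensions w\cap\itraces\subseteq P$.

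Third, I would prove the two directions.
\begin{description}
\item[$\Rightarrow$] Given $t\in P$, take its finite prefix $w$ with $\extensions w \cap \itraces\subseteq P$, and let $|w|=n$. Choose $\varepsilon=2^{-n}$. Any $u$ with $d(t,u)<2^{-n}$ is either $t$ itself or shares a common prefix of length greater than $n$ with $t$. In both cases $u$ extends $w$, so $u\in P$.
\item[$\Leftarrow$] Given $t\in P$ and $\varepsilon>0$ with the ball inside $P$, pick $n$ with $2^{-n}\le\varepsilon$ and set $w=t_0\cdots t_n$. Every infinite extension $u$ of $w$ satisfies $d(t,u)\le 2^{-(n+1)}<\varepsilon$, so $\extensions w\cap\itraces\subseteq P$. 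Then $w\in\dclose P\cap\ftraces$, and therefore $t\in\guar{P}$.
\end{description}

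The main obstacle is only the off-by-one bookkeeping between the supremum in $d$ and the prefix length, including the edge case $n=0$ where $w=\varepsilon$. I would handle it by first proving the displayed ball identity as a separate lemma, so that both directions become one-liners.
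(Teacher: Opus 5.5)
Your proposal is correct and follows essentially the same route as the paper. Both directions rest on the identity that the ball $\{u \mid d(t,u)<2^{-r}\}$ is exactly $\extensions(t_0\cdots t_r)\cap\itraces$, combined with $\guar{P}=P$ and the infinite-extension form of $\dclose$; the only difference is that in the forward direction you take radius $2^{-|w|}$ rather than the paper's $2^{-(|w|-1)}$, which avoids a negative exponent when $w$ is empty.
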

\begin{proof}
\begin{description}
\item[$\implies$] Assume $P$ is a guarantee property. Let $t \in P$. Then, because $P$ is guarantee, there must exist some finite $p \in \prefixes{t}$ that is definitive for $P$, i.e.\ $\extensions{p} \cap \itraces \subseteq P$. Set $r = |p| - 1$ and $\varepsilon = 2^{-r}$. Then the metric ball of radius $\varepsilon$, i.e.\ $\{ u \mid d(t,u) < \varepsilon \}$, is the set of all infinite extensions of $p$, i.e $\extensions{p} \cap \itraces$, because $d(t,u) < 2^{-r}$ iff $t$ and $u$ agree for a prefix of length $r + 1$ --- that is, $p$. Since $t$ was arbitrary, this shows that $\forall t \in P.\ \exists \varepsilon > 0.\ \{ u \mid d(t,u) < \varepsilon \} \subseteq P$ as required.
\item[$\impliedby$] Assume $P$ is open in the metric topology, i.e.\ that $\forall t \in P.\ \exists \varepsilon > 0.\ \{ u \mid d(t,u) < \varepsilon \} \subseteq P $.  We shall show that $P$ is a guarantee property, by showing that $P \subseteq \guar{P}$. Assume $t \in P$. Let $\varepsilon > 0$ be such that  $\{ u \mid d(t,u) < \varepsilon \} \subseteq P $. By the Archimedean property, there must be some natural number $r$ such that $2^{-r} < \varepsilon$. The ball of radius $2^{-r}$, i.e.\ $\{ u \mid d(t,u) < 2^{-r} \}$ is therefore contained within the ball of radius $\varepsilon$, which, by our openness assumption, must in turn be contained within $P$. Because $d(t,u) < 2^{-r}$ iff $t$ and $u$ agree for a prefix of length $r + 1$, let $p \in \prefixes{t}$ be a prefix of $t$ of length $r + 1$. Then the ball of radius $2^{-r}$ is exactly $\extensions{p} \cap \itraces$. Therefore, $\extensions{p} \cap \itraces \subseteq \{ u \mid d(t,u) < \varepsilon \} \subseteq P$, i.e.\ $p \in \lightning{P}$. This shows, as $t$ was arbitrary, that all $t \in P$ are the extension of some finite definitive $p \in \lightning{P}$, and therefore that $t \in \guar{P}$. Therefore $P \subseteq \guar{P}$ and $P$ is a guarantee property.\qedhere 
\end{description}	
\end{proof}
As a consequence, the collection of guarantee properties over $\Sigma$ constitutes a topology: it contains $\emptyset$ and $\Sigma^\omega$, and is closed under arbitrary unions and finite intersections.

\subsubsection{Liveness and Morbidity}
Liveness properties are those that can never be definitively refuted by a finite prefix. Thus a property $P$ represents a liveness property iff all finite prefixes are prefixes of traces in $P$, i.e., $\ftraces \subseteq \prefixes P$. Equivalently, liveness properties are \emph{dense}: $P$ is a liveness property iff $\safety{P} = \itraces$. 

As morbidity properties are the complement of liveness properties, we can say $P$ represents a morbidity property iff all finite prefixes are prefixes of traces in its complement, i.e., $\ftraces \subseteq \prefixes (\compl{P})$. Equivalently, morbidity properties have an empty interior: $P$ is a morbidity property iff $\guar{P} = \emptyset$.

\begin{thm}[Alpern {\&} Schneider, redux]
	Every property is the union of a guarantee and a morbidity property.
\end{thm}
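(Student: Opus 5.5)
We will exhibit the decomposition explicitly: take the guarantee part to be the guarantee kernel $\guar{P}$, and the morbidity part to be $M \triangleq P \cup \compl{(\safety{\guar{P}})}$, or equivalently $M = P \cup \guar{\compl{\guar{P}}}$. This is the dual of the classical Alpern {\&} Schneider construction, which writes a property as the intersection of the safety property $\safety{P}$ and the liveness property $P \cup \compl{\safety{P}}$.

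Two of the three required facts are immediate. First, $\guar{P}$ is a guarantee property by idempotence of the kernel. Second, $\guar{P} \cup M = P$: since $\guar{P} \subseteq P$, we have $\guar{P} \cup M = P \cup \compl{(\safety{\guar{P}})}$. Moreover $\compl{(\safety{\guar{P}})} = \guar{\compl{\guar{P}}}$ by the definition of safety closure. Thus it remains to show $\guar{\compl{\guar{P}}} \subseteq P$.

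\textbf{This inclusion is wrong in general, so we instead use $M = P \setminus \guar{P}$'s superset $P \cup \compl{\safety{\guar P}}$ only if it helps.} The simpler choice is $M \triangleq P \cap \compl{\guar{P}}$ union something; but the cleanest choice is to dualise the known safety/liveness result. Apply the safety--liveness style argument to $\compl{P}$: the set $\safety{\compl{P}} = \compl{\guar{P}}$ is safety, and $L \triangleq \compl{P} \cup \compl{(\safety{\compl{P}})} = \compl{P} \cup \guar{P}$ is liveness. Then $\compl{P} = \safety{\compl{P}} \cap L$, using $\compl{P} \subseteq \safety{\compl{P}}$. Taking complements gives
\[ P = \guar{P} \cup \compl{L}, \qquad \compl{L} = P \cap \compl{\guar{P}} = P \setminus \guar{P}. \]
So the decomposition is $P = \guar{P} \cup (P \setminus \guar{P})$, and the union identity is trivial.

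The real work is showing that $M = P \setminus \guar{P}$ is morbidity, that is, $\guar{M} = \emptyset$. By monotonicity, $\guar{M} \subseteq \guar{P} \cap M$. We also have $\guar{M} \subseteq M \subseteq \compl{\guar{P}}$ directly. Combining these, $\guar{M} \subseteq \guar{P} \cap \compl{\guar{P}} = \emptyset$.

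I expected the main obstacle to be this morbidity check, in particular verifying that $\guar{\cdot}$ applied to a subset of $P$ stays inside $\guar{P}$, which needs monotonicity of the kernel. Once the decomposition is chosen correctly, that step is only two lines. The only genuine subtlety is picking $M = P \setminus \guar{P}$ rather than a larger set. Alternatively, if one wants the statement as a literal corollary, it also follows by complementing the safety--liveness theorem for $\compl{P}$, using that complements swap guarantee/safety and morbidity/liveness by definition.
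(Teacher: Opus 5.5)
Your final argument is correct and is essentially the paper's proof. You use the same decomposition $P = \guar{P} \cup (P \setminus \guar{P})$ and the same check $\guar{P \setminus \guar{P}} \subseteq \guar{P} \cap \compl{(\guar{P})} = \emptyset$. The only difference is that you get the first inclusion from monotonicity together with $\guar{M} \subseteq M$, while the paper cites distributivity over intersection (Theorem~\ref{thm:gkint}) but only uses its easy, monotonicity-based direction.

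You should delete the abandoned first candidate $P \cup \compl{(\safety{\guar{P}})}$. You should also note that the closing ``corollary'' route would be circular here, because the paper derives the safety--liveness decomposition from this very theorem.
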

\begin{proof}
For a given property $P$, we can decompose it straightforwardly as follows:
$$
P = \underbrace{\;\;\guar{P}\;\;}_\text{guarantee} \;\;\cup\;\; \underbrace{(P \setminus \guar{P})}_\text{morbidity}
$$ 
$\guar{P}$ is clearly a guarantee property. It remains to show that $P \setminus \guar{P}$ is a morbidity property. We shall do this by showing that its interior is empty, i.e.:
$$
\begin{array}{lcll}
	\guar{P \setminus \guar{P}} &=& \guar{P \cap \compl{(\guar{P})}}\\[0.3em]
	& = &  \guar{P} \cap \guar{\compl{(\guar{P})}} &\quad \text{(Theorem~\ref{thm:gkint})}\\[0.3em]
	& \subseteq & \guar{P} \cap \compl{(\guar{P})} & \quad\text{(g.k.~is a subset)}\\[0.1em]
	& = & \emptyset  \\[-4ex]
\end{array}
$$
\end{proof}
\noindent As we have proved that every property is the union of a guarantee and a morbidity property, we can then show that every property is the intersection of a safety and liveness property, the famous result of Alpern {\&} Schneider~\cite{alpernschneider}, simply by taking the complement.

\subsection{Monitorability}
\label{sec:monitorability}
There are some properties for which \ltlthree will \emph{always} give a \unknown\@ answer. For example, take the standard \emph{request/acknowledge} format:
$$
\varphi = \Always (r \Rightarrow \Eventually a)
$$%
which states that all requests ($r$) must eventually be acknowledged ($a$). For every finite prefix $u$, we have $ur^\omega \in \semantics{\varphi}_3\ \false $ and $ua^\omega \in \semantics{\varphi}_3\ \true$. As the $\false$ and $\true$ answers are non-overlapping (Theorem~\ref{thm:em}), $u$ cannot be a definitive prefix.  Therefore, \emph{all} finite prefixes are not definitive, so no run-time monitor will ever be useful for this property.
In order for run-time monitoring to be useful, temporal properties must be \emph{monitorable}. Exactly what \emph{monitorable} means, however, is a murky question, and several possible answers exist in the literature. We formally characterise and compare several of these definitions below. Our findings are summarised in Figure~\ref{fig:moncube}. In this diagram, an arrow indicates a \emph{strict} inequality. For example, all safety properties are bad monitorable (see Theorem~\ref{thm:safetybad}), but not all bad monitorable properties are safety properties. 
\begin{figure*}
\begin{center}
\begin{tikzpicture}
\node at (0,0) {$\mathsf{Strong}$};
\node at (0,-6.5) {$\mathsf{Weak}$};
\node at (0,-0.5) (sm) {$\cap$};
\node at (-1.5,-1.5) (saf) {$\mathsf{Safety}$};
\node at (1.5,-1.5) (guar) {$\mathsf{Guarantee}$};
\draw[thick] (sm) -- (saf) (sm) -- (guar);
\node at (-1.5,-4.75) (bad) {$\mathsf{Bad}$};
\node at (1.5,-4.75) (good) {$\mathsf{Good}$};
\draw[thick,<-] (bad) -- (saf);
\draw[thick,<-] (good) -- (guar);
\node at (0,-6) (wm) {$\cup$};
\node at (0,-2.5) (usg) {$\cup$};
\node at (0,-3.5) (obl) {$\mathsf{Obligation}$};
\node at (0,-4.5) (bm) {$\mathsf{BLS}$};
\draw[thick] (bad) -- (wm) (good) -- (wm);
\draw[thick,] (saf) -- (usg);
\draw[thick,] (guar) -- (usg);
\draw[thick,->] (usg) -- (obl);
\draw[thick,->] (obl) -- (bm);
\draw[thick,->] (bm) -- (wm);
\end{tikzpicture}
\end{center}
\caption{The monitorability lattice}
\label{fig:moncube}
\end{figure*}

\subsubsection{Good and bad monitorability}\label{sec:goodbad}
Kupferman {\&} Vardi~\cite{kupfermanvardi} define the \emph{bad prefixes} of a property $P \subseteq \itraces$ as those prefixes that cannot be extended to a trace that is in $P$, and further define \emph{good prefixes} as those for whom all infinite extensions are in $P$. In our notation, the good prefixes of $P$ are exactly the definitive prefixes $\dclose P$, and the bad prefixes are the definitive prefixes of the complement $\dclose \compl{P}$. From this, we define a property $P$ as \emph{good monitorable} iff there exists a good finite prefix of $P$, \emph{or} there are no traces in $P$. Similarly, a property $P$ is \emph{bad monitorable} iff there exists a bad finite prefix of $P$, \emph{or} if every trace is in $P$.
\begin{definition}[Good and bad monitorability]
	$$P\ \text{is good monitorable} \; \longleftrightarrow \; \dclose{\!P} \cap \ftraces \neq \emptyset \lor P = \emptyset \; \longleftrightarrow  \;\guar{P} \neq \emptyset \lor P = \emptyset $$
		$$P\ \text{is bad monitorable} \; \longleftrightarrow \; \dclose{}\!\compl{P} \cap \ftraces \neq \emptyset \lor P = \itraces  \; \longleftrightarrow  \;\safety{P} \neq \itraces \lor P = \itraces $$
\end{definition}
\begin{thm}\label{thm:safetybad}
All safety properties are bad monitorable, and all guarantee properties are good monitorable.
\end{thm}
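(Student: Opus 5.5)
I will prove the guarantee half directly, using the kernel characterisation of good monitorability, and then obtain the safety half by duality through complementation.

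\textbf{Guarantee properties.} Let $P$ be a guarantee property, so $\guar{P} = P$. I split on whether $P$ is empty.
\begin{itemize}
\item If $P = \emptyset$, then $P$ is good monitorable by the second disjunct of the definition.
\item If $P \neq \emptyset$, then $\guar{P} = P \neq \emptyset$, so the first disjunct $\guar{P} \neq \emptyset$ holds.
\end{itemize}
The two formulations of good monitorability in the definition are equivalent. Still, I will check this once. By definition, $\guar{P} = \extensions(\dclose P \cap \ftraces) \cap \itraces$. This set is nonempty exactly when there is a finite $t \in \dclose P$ with an infinite extension. Every finite trace has one, for instance $t\sigma^\omega$ for any state $\sigma$. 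The only caveat is the degenerate case $\Sigma = \emptyset$. There $\itraces = \emptyset$, so $P = \emptyset$ and the second disjunct covers it anyway.

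\textbf{Safety properties.} Let $P$ be a safety property, so $\safety{P} = P$. Recall that $\safety{P} = \compl{(\guar{\compl{P}})}$, so $\compl{P}$ is a guarantee property. Applying the previous paragraph to $\compl{P}$ gives $\guar{\compl{P}} \neq \emptyset$ or $\compl{P} = \emptyset$. Taking complements, this is exactly $\safety{P} \neq \itraces$ or $P = \itraces$, which is the bad monitorability condition in its closure form. I can also argue from the definitive-set form: $\compl{P}$ nonempty gives a finite prefix in $\dclose \compl{P}$, since $\compl{P} = \guar{\compl{P}}$ means every trace of $\compl{P}$ extends such a prefix.

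The main obstacle is only bookkeeping: confirming that the two stated formulations of each monitorability notion agree, in particular the step $\guar{P} \neq \emptyset \iff \dclose P \cap \ftraces \neq \emptyset$, which needs the empty-alphabet edge case handled. Beyond that, the result is immediate from the kernel and closure characterisations of guarantee and safety properties.
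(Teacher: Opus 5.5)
Your proof is correct and follows the paper's argument. You split on whether $P$ is empty, use $\guar{P} = P$ to get a nonempty kernel (equivalently, a finite good prefix in $\dclose P$), and obtain the safety half by complementation; the only differences are that you conclude through the kernel formulation $\guar{P} \neq \emptyset$ instead of exhibiting the prefix, and that you add a check of the $\Sigma = \emptyset$ edge case, which the paper leaves implicit.
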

\begin{proof}
Consider a guarantee property $P$. If $P = \emptyset$ then it is trivially good monitorable. Otherwise, let $t$ be an infinite trace in $P$. Because $P$ is a guarantee property, $\guar{P} = P$, and thus $t$ must be the extension of some finite good prefix $p \in \dclose P$. Thus $P$ is good monitorable. For safety properties and bad monitorability, we take the complement.
\end{proof}
\subsubsection{Weak monitorability}
Next we define the largest class of properties for which monitors can be of any use, \emph{weak monitorability}. A property is weak monitorable if there exists at least one definitive prefix, good or bad. This definition of monitorability is often attributed to Pnueli {\&} Zaks~\cite{PnueliZaks}, and is also called ``weak monitorability'' by Chen et al.~\cite{ChenWM}.
\begin{definition}{Weak Monitorability}
$$P\ \text{is weak monitorable}\; \longleftrightarrow \;   \guar{P} \cup \guar{\compl{P}} \neq \emptyset \; \longleftrightarrow \; \safety{P} \setminus \guar{P} \neq \itraces $$
We call the set $\safety{P} \setminus \guar{P}$ the \emph{unmonitorable traces}  (topologically speaking, the \emph{frontier}) of $P$, because these are precisely those infinite traces for which definitive answers cannot be given after observing only a finite prefix.

As can be seen in Figure~\ref{fig:moncube}, the class of weak monitorable properties is just the union of the classes of good monitorable and bad monitorable properties.

Weak monitorability is closed under negation, but not union or intersection. 
For example, take properties $P = \Always \Eventually b \lor \Next a$ and $Q = \Always \Eventually b \lor \Next (\neg a)$. Both $P$ and $Q$ are weak monitorable, but $P \land Q$ is not monitorable.
\end{definition}
\begin{figure*}
\begin{center}
\begin{tikzpicture}
	\draw[dotted,thick, fill=orange!20!white] (0,5) -- (-1.5,5.5) -- (0,6);
	\draw[draw=orange!70!black] (0,0.5) -- (0,6);
	\draw[thick,dotted,fill=blue!10!white] (0,0.5) -- (-1.5,1) -- (0,1.5) -- (-2,2.25) -- (0,3);
	\draw[thick,dotted] (0,3.25) -- (-2,4) -- (0,4.75);
	\draw[very thick, draw=blue!70!black] (0,0.5) -- (0,4);
	\draw[thick] (-0.1,6) -- (0.1,6);
	\draw[thick] (-0.1,5) -- (0.1,5);
	\draw[thick] (-0.1,3) -- (0.1,3);
	\draw[thick] (-0.1,0.5) -- (0.1,0.5);
	\draw[thick] (-0.1,4) -- (0.1,4);
	\node at (-0.5,2.25) {\scriptsize $\dclose{P}$};
	\node at (-0.5,1) {\scriptsize $\dclose{P}$};
	\node at (-0.5,5.5) {\scriptsize $\dclose{\compl{P}}$};
	\node at (-0.75,4) {\scriptsize $\dclose{(\safety{P} \setminus \guar{P})}$};
	\draw[thick] (0.9,0.5) -- (1.1,0.5);
	\draw[thick] (0.9,4) -- (1.1,4);
	\draw[thick] (1,0.5) to node[auto,swap] {$P$} (1,4) ;
	\draw[thick] (1.9,0.5) -- (2.1,0.5);
	\draw[thick] (1.9,3) -- (2.1,3);
	\draw[thick] (2,0.5) to node[auto,swap] {$\guar{P}$} (2,3) ;
	
	\draw[thick] (1.9,5) -- (2.1,5);
	\draw[thick] (1.9,6) -- (2.1,6);
	\draw[thick] (2,5) to node[auto,swap] {$\guar{\compl{P}}$} (2,6) ;
	\draw[thick] (2.9,0.5) -- (3.1,0.5);
	\draw[thick] (2.9,5) -- (3.1,5);
	\draw[thick] (3,0.5) to node[auto,swap] {$\safety{P}$} (3,5) ;
	\draw[dotted] (0,0.5) -- (3,0.5);
	\draw[dotted] (0,5) -- (3,5);
	\draw[dotted] (0,6) -- (2,6);
	\draw[dotted] (0,3) -- (2,3);
	\draw[dotted] (0,4) -- (1,4);
	\node at (0,0)  (si) {$\itraces$};
	\node at (-4,4) (ugly) {ugly prefixes};
	\draw[->] (ugly) to[bend right] (-1.8,3.8);
	\node at (-4,1.5) (good) {good prefixes};
	\draw[->] (good) to[bend right] (-1.2,0.8);
	\draw[->] (good) to[bend right] (-1.2,1.8);
	\node at (-4,5.5) (bad) {bad prefixes};
	\draw[->] (bad) to[bend right] (-1.2,5.3);
		\node at (-5,0) (s0) {$\Sigma^0$};
		\draw[->] (s0) to node[auto, swap] {prefix length} (si);
\end{tikzpicture}
\end{center}
\caption{The connection between Kupferman {\&} Vardi and Bauer's prefix characterisations and our kernel and closure operators}
\label{fig:operators}
\end{figure*}

\subsubsection{Ugly prefixes}
Bauer et al.~\cite{bauercomparing} further define \emph{ugly prefixes} as those finite prefixes that cannot be finitely extended into good nor bad prefixes. Note that the good, bad, and ugly prefixes do not constitute a complete classification of all finite prefixes. For example, the finite prefix $\textit{ppp}\dots$ is neither good nor bad for $p \Until q$, but it is not ugly either, as it can be extended with $q$ giving a good prefix, or with $\emptyset$ giving a bad prefix. Here, $\emptyset$ is the state satisfying neither $p$ nor $q$.

\begin{thm}[Frontier view of ugly traces]\label{thm:frontierugly}
The ugly prefixes of a property $P$ are the finite definitive prefixes of the unmonitorable traces of $P$:
$$t\ \text{is ugly} \; \longleftrightarrow \; t \in \dclose (\safety{P}\setminus\guar{P}) \cap \ftraces$$
\end{thm}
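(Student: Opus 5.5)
We prove the two inclusions by unfolding both sides into statements about infinite extensions of the finite prefix $t$. First we fix the meaning of ``ugly'': $t \in \ftraces$ is ugly iff no finite extension $tv$ ($v \in \ftraces$) lies in $\dclose P \cup \dclose \compl{P}$. On the right-hand side, recall the remark after the definition of $\dclose$: for a property $Q \subseteq \itraces$, $t \in \dclose Q$ iff $\extensions t \cap \itraces \subseteq Q$. So the right-hand side says $\extensions t \cap \itraces \subseteq \safety{P} \setminus \guar{P}$. By the definitions of the guarantee kernel and the safety closure, an infinite trace $w$ lies outside $\safety{P}\setminus\guar{P}$ iff $w \in \guar{P} \cup \guar{\compl{P}}$, i.e.\ iff $w$ has a finite prefix in $\dclose P$ or a finite prefix in $\dclose\compl{P}$. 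Thus the statement reduces to showing that the following are equivalent: (a) no finite extension of $t$ is good or bad; (b) no infinite extension $w$ of $t$ has a finite prefix that is good or bad.

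For (b) $\Rightarrow$ (a), we argue by contraposition. Suppose $tv$ with $v$ finite is good, i.e.\ $tv \in \dclose P$. By the elementary properties listed in Section~\ref{sec:dps}, $\extensions(tv)\cap\itraces \subseteq P$. Since $\Sigma$ is nonempty, pick any infinite $w \in \extensions(tv)$. Then $w$ is an infinite extension of $t$ with the finite good prefix $tv$, so $w \in \guar{P}$. The bad case is symmetric, using $\compl{P}$ and giving $w \in \guar{\compl{P}}$.

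For (a) $\Rightarrow$ (b), again by contraposition, suppose some infinite $w \in \extensions t$ has a finite prefix $p \in \dclose P$ (the case $p \in \dclose\compl{P}$ is symmetric). Both $p$ and $t$ are prefixes of $w$, so they are comparable. If $t$ is a prefix of $p$, then $p = tv$ is a good finite extension of $t$. If instead $p$ is a prefix of $t$, then $t \in \extensions p \subseteq \dclose P$, because definitive sets are closed under extension, and $t = t\varepsilon$ is itself a good finite extension. Either way $t$ is not ugly. This comparability step is exactly the one used in the proof of Theorem~\ref{thm:gkint}.

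The only delicate point is the bookkeeping in the first paragraph: checking that the complement of $\safety{P}\setminus\guar{P}$ inside $\itraces$ is exactly $\guar{P}\cup\guar{\compl{P}}$, which follows since $\safety{P} = \compl{(\guar{\compl{P}})}$ and $\guar{P} \subseteq P \subseteq \safety{P}$. We also need that $\Sigma$ is nonempty so that infinite extensions exist; this holds because $\Sigma = \Pow(A)$. Everything else is a direct unfolding of definitions.
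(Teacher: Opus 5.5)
Your proof is correct and follows essentially the same route as the paper: both directions rewrite the frontier as the complement of $\guar{P}\cup\guar{\compl{P}}$ and rely on definitive sets being closed under extension. Your explicit comparability case split (a good or bad prefix of $w$ that is shorter than $t$ forces $t$ itself to be good or bad) and your use of $\Sigma \neq \emptyset$ spell out two steps that the paper's proof leaves implicit.
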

\begin{proof}
Showing each direction separately:
\begin{description}
	\item[$\implies$] If $t$ is ugly then all finite extensions of $t$ are not definitive for $P$ or $\compl{P}$. Let $w$ be any infinite extension of $t$. If $w \in \guar{P} \cup \guar{\compl{P}}$, then there must be some finite prefix of $w$ that is definitive for $P$ or for $\compl{P}$. As such a prefix cannot exist,  $w$ is in neither $\guar{P}$ nor $\guar{\compl{P}}$. As $w$ is arbitrary, all infinite extensions of $t$ are in $\compl{(\guar{P} \cup \guar{\compl{P}})} = \safety{P}\setminus\guar{P}$. Thus $t \in \dclose (\safety{P}\setminus\guar{P}) \cap \ftraces$.
	\item[$\impliedby$] Given a finite prefix $t \in \dclose (\safety{P}\setminus\guar{P})$, we must show that $t$ is ugly. Let $u \in \extensions t$ be any finite extension of $t$. As definitive sets are closed under extension, $u \in \dclose (\safety{P}\setminus\guar{P})$. As the guarantee kernel of $P$, the guarantee kernel of $\compl{P}$, and the frontier $\safety{P}\setminus\guar{P}$ are all disjoint, and $u$ is definitive for the frontier, $u$ cannot be definitive for $P$ (i.e.~good) nor for $\compl{P}$ (i.e~bad). Therefore, $t$ has no good nor bad finite extensions, and thus $t$ is an ugly prefix.
\popQED
\end{description}

\end{proof}

\subsubsection{BLS monitorability}

Bauer et al.~\cite{bauercomparing} defines $P$ as monitorable iff there are no (finite) ugly prefixes. Intuitively, if a property is BLS (Bauer-Leucker-Schallhart) monitorable, then no matter what prefix has already been observed, a definitive answer is always possible, in the sense of a liveness property.
 Due to Theorem~\ref{thm:frontierugly}, this definition is equivalent to the following:
\begin{definition}[BLS monitorability]
$$P\ \text{is BLS monitorable} \; \longleftrightarrow \; \guar{\safety{P} \setminus \guar{P}} = \emptyset$$
Equivalently, $P$ is BLS monitorable iff the unmonitorable traces of $P$ are a morbidity property. Topologically, $P$ is BLS monitorable iff the \emph{frontier} of $P$ has an \emph{empty interior}.\footnote{Because the frontier of $P$ is always closed, we could here say ``nowhere dense'' instead of ``empty interior''.} This topological characterisation was first established by Diekert and Leucker~\cite{diekert2014}, and is independently reconstructed here.\end{definition}
\begin{thm}
	Every BLS monitorable property is weak monitorable.
\end{thm}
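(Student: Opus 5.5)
The plan is to prove the contrapositive: if $P$ is not weak monitorable, then $P$ is not BLS monitorable. Suppose $P$ is not weak monitorable. By the definition of weak monitorability, $\guar{P} \cup \guar{\compl{P}} = \emptyset$, or equivalently $\safety{P} \setminus \guar{P} = \itraces$. The frontier of $P$ is then the whole space. By the BLS characterisation we must therefore show $\guar{\itraces} \neq \emptyset$.

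The key step is to compute $\guar{\itraces}$ directly from the definition of the guarantee kernel. The empty trace $\varepsilon$ is a finite trace, and every infinite extension of it lies in $\itraces$. So $\varepsilon \in \dclose \itraces \cap \ftraces$, and consequently $\guar{\itraces} = \extensions(\dclose\itraces \cap \ftraces) \cap \itraces \supseteq \extensions\varepsilon \cap \itraces = \itraces$. Hence $\guar{\itraces} = \itraces$; this also follows from $\itraces$ being open in the topology. Provided $\itraces \neq \emptyset$, i.e.\ $\Sigma$ is nonempty, we get $\guar{\safety{P}\setminus\guar{P}} = \itraces \neq \emptyset$, so $P$ is not BLS monitorable, as required.

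An alternative, more direct route goes through Theorem~\ref{thm:frontierugly}. If $P$ is not weak monitorable, then no finite prefix is good or bad, so $\varepsilon$ has no good or bad finite extension and is therefore an ugly prefix. This contradicts BLS monitorability, which requires that there be no ugly prefixes.

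The only real obstacle is the degenerate case $\Sigma = \emptyset$. In that case $\itraces = \emptyset$ and every property is trivially BLS monitorable, while $\guar{P}\cup\guar{\compl{P}} = \emptyset$, so the statement would fail. I would handle this by noting that $\Sigma = \Pow(A)$ always contains $\emptyset$ and is hence nonempty. Then $\itraces \neq \emptyset$ (for example, the constant trace $\emptyset^\omega$ lies in it), which closes the argument. Everything else is an unfolding of definitions.
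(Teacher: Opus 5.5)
Your proof is correct and is essentially the paper's argument. If $P$ is not weak monitorable, the frontier $\safety{P}\setminus\guar{P}$ is all of $\itraces$, and $\guar{\itraces} = \itraces \neq \emptyset$ contradicts BLS monitorability. Your explicit handling of $\itraces \neq \emptyset$ brings out an assumption the paper uses silently. Note, though, that in Section~\ref{sec:properties} $\Sigma$ is an arbitrary state set (e.g.\ $\{a,b\}$), so the right justification is the standing assumption $\Sigma \neq \emptyset$, not $\Sigma = \Pow(A)$.
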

\begin{proof}
	Let $P$ be BLS monitorable, i.e., $\guar{\safety{P} \setminus \guar{P}} = \emptyset$. If $P$ were not weak monitorable, then $\safety{P} \setminus \guar{P} = \itraces$. As $\guar{\itraces} = \itraces \neq \emptyset$, this is a contradiction. Thus $P$ must be weak monitorable.
\end{proof}
\begin{counterexample}[$\mathsf{Weak} \supsetneq \mathsf{BLS}$]
	Consider $\Always \Eventually a \lor \Next b$. The prefix $bbbb$ is good, so
        this property is weak monitorable, however it is not BLS monitorable as the prefix $aaaa$ is ugly.
\end{counterexample}
\begin{thm}[Closure properties for BLS monitorability]\label{thm:bauerclosure} BLS monitorability is closed under complement, finite intersection and finite union.
\end{thm}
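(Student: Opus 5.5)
The plan is to work entirely with the topological characterisation: $P$ is BLS monitorable iff its frontier $\safety{P}\setminus\guar{P}$ has empty interior. Write $\mathrm{fr}(P)$ for $\safety{P}\setminus\guar{P}$ (notation used only in this sketch). By the definition of weak monitorability, $\mathrm{fr}(P) = \compl{(\guar{P}\cup\guar{\compl{P}})}$.

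\emph{Complement.} The expression $\compl{(\guar{P}\cup\guar{\compl{P}})}$ is symmetric in $P$ and $\compl{P}$, since $\compl{\compl{P}} = P$. Hence $\mathrm{fr}(\compl{P}) = \mathrm{fr}(P)$, and closure under complement is immediate.

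\emph{Intersection.} The argument has two steps.
\begin{enumerate}
\item First show $\mathrm{fr}(P\cap Q)\subseteq \mathrm{fr}(P)\cup\mathrm{fr}(Q)$. Monotonicity of the safety closure gives $\safety{P\cap Q}\subseteq \safety{P}\cap\safety{Q}$. Theorem~\ref{thm:gkint} gives $\guar{P\cap Q}=\guar{P}\cap\guar{Q}$. Therefore
$$\mathrm{fr}(P\cap Q)\subseteq (\safety{P}\cap\safety{Q})\setminus(\guar{P}\cap\guar{Q}) \subseteq \mathrm{fr}(P)\cup\mathrm{fr}(Q).$$
Since the guarantee kernel is monotone, it then suffices to show that $\guar{\mathrm{fr}(P)\cup\mathrm{fr}(Q)}=\emptyset$ whenever both frontiers have empty guarantee kernel.
\item Then show that a finite union of closed sets with empty interior has empty interior. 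This is the main obstacle, because the guarantee kernel does not distribute over union (see the counterexample above). Suppose for contradiction that some finite $t\in\dclose(\mathrm{fr}(P)\cup\mathrm{fr}(Q))$ exists.
\begin{itemize}
\item Not every infinite extension of $t$ lies in $\mathrm{fr}(P)$. Otherwise $t\in\dclose\,\mathrm{fr}(P)$, so $t$ would witness a nonempty $\guar{\mathrm{fr}(P)}$.
\item So pick an infinite $w\in\extensions t$ with $w\in\guar{P}\cup\guar{\compl{P}}$. Then $w$ has a finite prefix $p$ that is definitive for $P$ or for $\compl{P}$.
\item Let $s$ be the longer of $p$ and $t$; both are prefixes of $w$, so $s$ extends both. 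Because definitive sets are closed under extension, every infinite extension of $s$ lies in $\guar{P}$, or every one lies in $\guar{\compl{P}}$. In either case none lies in $\mathrm{fr}(P)$.
\item Also $s$ extends $t$, so $s\in\dclose(\mathrm{fr}(P)\cup\mathrm{fr}(Q))$. Hence $\extensions s\cap\itraces\subseteq \mathrm{fr}(Q)$, so $s$ is a finite definitive prefix of $\mathrm{fr}(Q)$. This makes $\guar{\mathrm{fr}(Q)}$ nonempty, contradicting BLS monitorability of $Q$.
\end{itemize}
\end{enumerate}

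\emph{Union.} Use De Morgan: $P\cup Q=\compl{(\compl{P}\cap\compl{Q})}$. Then closure under union follows from the two cases above.
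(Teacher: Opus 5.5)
Your proof is correct. The complement and union cases match the paper's: the paper writes out the symmetry $\safety{\compl{P}}\setminus\guar{\compl{P}} = \safety{P}\setminus\guar{P}$ as a chain of equalities and then applies De Morgan. Your intersection case, however, takes a different route.

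The paper works with the ugly-prefix formulation of BLS monitorability (justified by Theorem~\ref{thm:frontierugly}) and argues directly. Given a finite $w$, it extends $w$ to some $wv$ that is good or bad for $P$. A bad prefix for $P$ is already bad for $P\cap Q$. If $wv$ is good, it extends further to $wvx$ that is good or bad for $Q$. In the good case, extension-closure makes $wvx$ good for $P$ as well, hence good for $P\cap Q$.

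You instead stay with the kernel/closure characterisation and factor the argument through two topological facts. The first is $\mathrm{fr}(P\cap Q)\subseteq\mathrm{fr}(P)\cup\mathrm{fr}(Q)$, which you obtain from monotonicity of the safety closure and Theorem~\ref{thm:gkint}. The second is that the union of a closed set with empty interior and a set with empty interior has empty interior. Your argument in fact needs closedness only of $\mathrm{fr}(P)$, to pass from an infinite $w\notin\mathrm{fr}(P)$ to a basic open set $\extensions s\cap\itraces$ disjoint from $\mathrm{fr}(P)$.

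The combinatorial core is the same ``first decide $P$, then appeal to $Q$'' move as the paper's: your $s$ is good or bad for $P$ and would then be ugly for $Q$. Your packaging buys generality: it exhibits the theorem as an instance of the fact that, in any topological space, the sets with nowhere-dense boundary form a Boolean algebra. The paper's version buys economy: it is shorter and more elementary, needs neither Theorem~\ref{thm:gkint} nor the frontier inclusion, and explicitly produces the good or bad extension for $P\cap Q$.
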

\begin{proof}
\textit{Complement}: Let $P$ be a BLS monitorable property, i.e.\ $\guar{\safety{P} \setminus \guar{P}} = \emptyset$. Then $\guar{\safety{\compl{P}} \setminus \guar{\compl{P}}} = \guar{\compl{\guar{\compl{{\compl{P}}}}} \cap \compl{\guar{\compl{P}}}} = \guar{\compl{\guar{P}} \cap \compl{\guar{\compl{P}}}} = \guar{\compl{\guar{P}} \cap \safety{P}} = \guar{\safety{P} \setminus \guar{P}} = \emptyset$.\vspace{2ex}

\noindent \textit{Intersection}:  Let $P$ and $Q$ be BLS monitorable properties and let $w$ be a finite prefix. We must show that $w$ is not ugly for $P \cap Q$.
In other words, we must show that $w$ has a finite extension $wu$ that is either good or bad for $P \cap Q$.

As $P$ is BLS monitorable, $w$ is not ugly for $P$. Hence there exists a finite extension $wv$ that is either good or bad for $P$. 
If $wv$ is bad for $P$, it is also bad for $P \cap Q$, and we can take $u := v$ and be done. So assume $wv$ is good for $P$. As $Q$ is BLS monitorable, $wv$ has a finite extension $wvx$ which is either good or bad for $Q$. If $wvx$ is 
bad for $Q$, then it is also bad for $P \cap Q$, and we can take $u := vx$ and be done. So assume $wvx$ is good for $Q$. As $wvx$ is an extension of $wv$, it is also good for $P$. Thus it is good for $P \cap Q$ and we can take $u := vx$. \vspace{2ex}

\noindent  \textit{Union}: Because BLS monitorability is closed under intersection and complement, it is also closed under union by de Morgan's laws.
\end{proof}
\begin{counterexample}[Failure of closure over infinite intersection]
	Consider the following family of properties: $$P_i = \{ t \in \itraces \mid t\ \text{contains at least}\ i\ \text{states labelled}\ a\}$$
	Each $P_i$ is BLS monitorable, as appending $i$ $a$-states to any finite prefix will yield a good prefix. The infinite intersection $\bigcap_i P_i$, however, requires an $a$-state to occur infinitely often, which is not BLS monitorable.
\end{counterexample}
\begin{thm}\label{thm:safetyguaranteebauer}
All safety and guarantee properties are BLS monitorable.\footnote{This also follows from a well known theorem about Hausdorff spaces that the frontier of any closed or open set has an empty interior.}
\end{thm}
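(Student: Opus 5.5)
I plan to prove the guarantee case directly and then obtain the safety case by complementation. BLS monitorability is closed under complement (Theorem~\ref{thm:bauerclosure}), and a property is safety iff its complement is guarantee. So it suffices to show that every guarantee property $P$ satisfies $\guar{\safety{P} \setminus \guar{P}} = \emptyset$.

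For a guarantee property, $\guar{P} = P$, so the frontier is $\safety{P} \setminus P = \safety{P} \cap \compl{P}$. I will use Theorem~\ref{thm:gkint}, the distribution of the guarantee kernel over binary intersection, to rewrite
$$\guar{\safety{P} \cap \compl{P}} = \guar{\safety{P}} \cap \guar{\compl{P}}.$$
Since the guarantee kernel is contained in its argument, $\guar{\safety{P}} \subseteq \safety{P}$. Moreover $\guar{\compl{P}} = \compl{(\safety{P})}$ by the definition of safety closure, $\safety{P} = \compl{(\guar{\compl{P}})}$. The intersection is therefore contained in $\safety{P} \cap \compl{(\safety{P})} = \emptyset$, as required. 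This mirrors the calculation in the proof of the Alpern~\&~Schneider redux theorem.

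For the safety case, I could alternatively argue directly. If $P$ is safety then $\safety{P} = P$, so the frontier is $P \setminus \guar{P}$. The redux theorem already shows that $P\setminus\guar{P}$ is a morbidity property, i.e.\ has empty guarantee kernel, which is exactly BLS monitorability. Symmetrically, this gives the guarantee case via complement, so either route works; I would present the safety case via the redux theorem and the guarantee case via Theorem~\ref{thm:bauerclosure}.

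The only delicate step is justifying $\guar{\compl{P}} = \compl{(\safety{P})}$ and the complement bookkeeping, which is just unfolding the definition of $\safety{\cdot}$; I expect no real obstacle.
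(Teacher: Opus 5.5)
Your proposal is correct, and the version you say you would present matches the paper's proof. For safety $P$ the frontier is $P\setminus\guar{P}$, whose guarantee kernel vanishes by Theorem~\ref{thm:gkint} together with $\guar{\cdot}$ being a subset (the same calculation as in the redux theorem), and the guarantee case then follows by complementation. Your direct computation for the guarantee case, via $\guar{\compl{P}} = \compl{(\safety{P})}$, is also valid; it is simply the dual of the same argument.
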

\begin{proof}
	Let $P$ be a safety property, i.e.\ $\safety{P} = P$. Then 
	$$\begin{array}{lclr}
\guar{\safety{P}\setminus\guar{P}} &=& \guar{{P}\setminus\guar{P}}&\text{($P$ is safety prop.)} \\[0.4em]
    & =& \guar{P\cap\compl{\guar{P}}}\\[0.4em]
        & =& \guar{P}\cap\guar{\compl{\guar{P}}} &\quad \text{(Theorem~\ref{thm:gkint})}\\[0.4em]
        & \subseteq & \guar{P}\cap{\compl{\guar{P}}} & \text{(g.k.~is subset)}\\[0.4em]
        & = & \emptyset
\end{array}$$
For guarantee properties, we take the complement.
\end{proof}
\begin{counterexample}[$\mathsf{BLS} \supsetneq \mathsf{Safety} \vee \mathsf{Guarantee}$]\label{cte:psi}
	Consider the following property $X$, originally due to Bauer et al.~\cite{bauercomparing}: $$(p \vee q) \Until r \vee \Always p$$
	 This property is neither a safety property (the trace $q^\omega$ violates $X$ but $q^\omega \in \safety{X}$, i.e.~this violation is not observable from a finite prefix) nor a guarantee property (the trace $p^\omega$ satisfies $X$ but $p^\omega \notin \guar{X}$, i.e.~this satisfaction is not observable from a finite prefix). However, this property is BLS monitorable, as any finite trace can be extended to one that satisfies or violates the property.
\end{counterexample}

\subsubsection{Obligation Properties}
Counterexample~\ref{cte:psi} above is neither a safety nor a guarantee property, but it is the
\emph{union} of a safety and a guarantee property. Manna {\&} Pnueli~\cite{safetyprogress0} define \emph{obligation properties} as
Boolean combinations of safety and guarantee properties:
\begin{definition}[Obligation Properties, per~\cite{safetyprogress0}]
Inductively:
\begin{itemize}
	\item All guarantee properties are obligation properties.
	\item All safety properties are obligation properties.
	\item If $A$ and $B$ are obligation properties, then so is $A \cap B$.
	\item If $A$ and $B$ are obligation properties, then so is $A \cup B$.
\end{itemize}
\end{definition}
\begin{thm} All obligation properties are BLS monitorable.	
\end{thm}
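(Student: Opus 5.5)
This follows by structural induction on the inductive definition of obligation properties, using results already in the paper. For the base cases, Theorem~\ref{thm:safetyguaranteebauer} states that every safety property and every guarantee property is BLS monitorable. For the inductive cases, suppose $A$ and $B$ are obligation properties that are BLS monitorable by the induction hypothesis. Theorem~\ref{thm:bauerclosure} shows that BLS monitorability is closed under finite intersection and finite union, so $A \cap B$ and $A \cup B$ are BLS monitorable. Only binary combinations occur in the definition, so finite closure suffices. The failure of closure under infinite intersection is irrelevant, since the inductive definition only ever produces finitely many combination steps.

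The proof is therefore a routine induction on the derivation that a property is an obligation property. The only point needing care is that the union case of Theorem~\ref{thm:bauerclosure} was itself derived from intersection and complement. Obligation properties need not be closed under complement syntactically, but this does not matter: the union case is stated directly as a closure property of the class of BLS monitorable properties. We apply it to $A$ and $B$, which lie in that class, and never need complements to be obligation properties.

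An alternative, direct argument via ugly prefixes would reprove the intersection case given in the proof of Theorem~\ref{thm:bauerclosure}, so I would avoid it. There is no real obstacle here. If I wanted the result to be more self-contained, the step to check is that the union case genuinely holds for arbitrary BLS monitorable $A$ and $B$. It does: $A \cup B = \compl{(\compl{A} \cap \compl{B})}$, and each of $\compl{A}$, $\compl{B}$, their intersection, and its complement stays BLS monitorable by the complement and intersection parts of Theorem~\ref{thm:bauerclosure}.
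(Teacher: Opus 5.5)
Your proposal is correct and is the paper's argument. The paper states the result as a direct corollary of Theorem~\ref{thm:safetyguaranteebauer}, which covers the safety and guarantee base cases, and the closure of BLS monitorability under finite intersection and union from Theorem~\ref{thm:bauerclosure}; you have simply written out that structural induction explicitly.
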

\begin{proof}
Corollary of Theorem~\ref{thm:safetyguaranteebauer} and our closure properties in Theorem~\ref{thm:bauerclosure}.
\end{proof}
\begin{counterexample}[$\mathsf{BLS} \supsetneq \mathsf{Obligation}$]
   Let $X$ be any property over $\Sigma = \{ a, b \}$ that is not an obligation property.
   $X'$ is the property over $\Sigma = \{ a , b, c \}$ given by:
   $$
   X' = X \cup \{ t \mid c\ \text{occurs in}\ t\}
   $$ 
   $X'$ is BLS monitorable, as any finite prefix $w$ can be extended with just one $c$ 
   to give a good prefix for $X'$ and is therefore non-ugly. But, $X'$ cannot be an obligation property, because if it were, we could show that $X' \setminus \{ t \mid~\!\!c\ \text{occurs in}\ t\} = X$ is an obligation property, which is a contradiction. 
\end{counterexample}
\subsubsection{Strong monitorability}
BLS monitorability allows for a property to have unmonitorable traces, but they must not be inevitable from any finite prefix. We could imagine a stronger kind of monitorability, which demands that the property contains no unmonitorable traces at all:
\begin{definition}[Strong monitorability]
$$P\ \text{is strong monitorable} \; \longleftrightarrow \; \underline{P} \cup \underline{P^C} = \Sigma^{\omega} \; \longleftrightarrow \; \safety{P}\setminus\guar{P} = \emptyset$$ 
Topologically, the strong monitorable properties are the sets with an empty frontier. 
Intuitively, a property $P$ is strong monitorable if for \emph{all} traces that satisfy or violate $P$, the satisfaction or violation of $P$ can be determined after only a finite observation. This corresponds to the \emph{complete monitorability} of Aceto et al.~\cite{aceto2019adventures}.

\end{definition}\begin{thm}
  Strong monitorable properties are the intersection of safety properties and guarantee properties.\footnote{This is an instance of the general topology theorem that all sets with an empty frontier are clopen.}
\end{thm}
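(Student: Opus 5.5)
The plan is to show that the class of strong monitorable properties coincides with the class of properties that are both safety and guarantee, i.e.\ $P$ is strong monitorable iff $\safety{P} = P$ and $\guar{P} = P$. Both directions go through the characterisation $\safety{P}\setminus\guar{P} = \emptyset$ from the definition of strong monitorability. Throughout we use the sandwich $\guar{P} \subseteq P \subseteq \safety{P}$, which holds because the guarantee kernel is a kernel operator and the safety closure is a closure operator.

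For the forward direction, assume $\safety{P}\setminus\guar{P} = \emptyset$, so $\safety{P} \subseteq \guar{P}$. Combined with the sandwich this gives $\safety{P} \subseteq \guar{P} \subseteq P \subseteq \safety{P}$, so all three sets coincide. Hence $\guar{P} = P$, making $P$ a guarantee property, and $\safety{P} = P$, making $P$ a safety property.

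For the converse, assume $P$ is both safety and guarantee, i.e.\ $\safety{P} = P = \guar{P}$. Then $\safety{P}\setminus\guar{P} = P \setminus P = \emptyset$, so $P$ is strong monitorable. Equivalently, using the other form of the definition, we have $\guar{\compl{P}} = \compl{P}$ because $P$ is safety, so $\guar{P} \cup \guar{\compl{P}} = P \cup \compl{P} = \itraces$.

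There is essentially no obstacle here. The one point that needs care is reading the statement correctly: ``the intersection of safety properties and guarantee properties'' means the intersection of the two \emph{classes}, namely the clopen sets, not sets of the form $S \cap G$. Under the latter reading the statement would be false, since every obligation-style intersection would then qualify. Once that is settled, the equivalence of the two forms of the strong monitorability definition, which the paper states, is the only other ingredient.
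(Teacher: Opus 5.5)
Your proof is correct and follows essentially the same route as the paper: from $\safety{P}\setminus\guar{P}=\emptyset$ and the sandwich $\guar{P}\subseteq P\subseteq\safety{P}$ you conclude $\safety{P}=\guar{P}=P$, and the converse is immediate. Your reading of the statement as the intersection of the two classes (the clopen sets) is also the reading the paper's proof uses.
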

\begin{proof}
	Let $P$ be a strong monitorable property, that is, $\safety{P}\setminus\guar{P}=\emptyset$. Then
	$\safety{P} = \safety{P} \setminus \guar{P} \cup \guar{P} = \guar{P}$. As $\guar{P} \subseteq P \subseteq \safety{P}$, this means that $\safety{P} = \guar{P} = P$. Therefore $P$ is both a safety and guarantee property. 
	Similarly let $P$ be both a safety and a guarantee property, i.e.~$\safety{P} = \guar{P} = P$. Therefore clearly $\safety{P} \setminus \guar{P} = \emptyset$, and thus $P$ is strong monitorable.
\end{proof}
As a corollary, strong monitorable properties are closed under complement, intersection, and union. 
\section{Related Work}
\label{sec:disc}
\subsection{Finfinite Semantics}
Aceto et al.~\cite{aceto2021} define monitoring systems essentially as two extension-closed predicates on $\traces$, $\textbf{acc}$ and $\textbf{rej}$. Because they are extension closed, judgements are \emph{irrevocable}, in the sense that if a monitor \textbf{acc}epts a finite trace, it must \textbf{acc}ept all extensions of that trace (and similarly for \textbf{rej}). This is highly reminiscent of the two sets we use for our \ltlthree semantics (which are definitive and therefore also extension-closed), however these monitors are in the context of \emph{finfinite semantics}, where properties can consist of both finite and infinite traces. Falcone et al.~\cite{Falcone2012} provide definitions of monitorability in a finfinite setting which they claim correspond to safety and guarantee properties, but, as Aceto et al.~\cite{aceto2021} show, this claim is false. Aceto et al.~\cite{aceto2021} then present a hierarchy of monitorability classes that, at first glance, highly resembles that of Figure~\ref{fig:moncube}, but finfinite semantics differ in significant ways from our infinite trace semantics. In a finfinite context, the property $\Next \bot$, using a weak interpretation of the $\Next$ operator, is not empty. Similarly, the property $\Next \top$, using a strong interpretation of the $\Next$ operator, is not trivially true. Under finfinite semantics, monitors cannot make the progress assumption that a given finite observation will eventually be extended with another state. This significantly changes the characteristics of our monitorability classes. For example, as Aceto et al.~\cite{aceto2019adventures} show, under finfinite semantics the only strong monitorable properties (what Aceto et al.~refer to as \emph{complete monitorability}) are $\emptyset$ and $\traces$, but in infinite semantics there are non-trivial strong monitorable properties. Similarly, as noted by  Aceto et al.~\cite{aceto2021}, when $\Sigma$ contains only one state $\sigma$, the property $\{\sigma^\omega\}$, which would be trivially true and thus strong monitorable in our infinite semantics, is not even weak monitorable (what Aceto et al.~\cite{aceto2021} call \emph{$\exists$PZ-monitorable}) under finfinite semantics. We conjecture that finfinite semantics form another dimension to our lattice in Figure~\ref{fig:moncube}, where the finfinite semantics classes are generally strictly smaller than the infinite semantics classes, but we leave detailed examination of finfinite semantics for future work. 
\subsection{Syntactic Treatments}

The topological characterisations developed in this work are purely semantic and apply to any property $L \subseteq \Sigma^\omega$. A natural progression for future work is to consider the syntactic characterisation of these monitorable fragments when restricted to the class of $\omega$-regular languages, analogous to the characterisation of the safety-progress hierarchy classes in LTL~\cite{safetyprogress0} or the work of Aceto et al.~\cite{aceto2019adventures} on monitorability over finfinite semantics using subsets of the modal $\mu$-calculus. 

Such syntactic treatments would facilitate the development of algorithms for the effective construction of monitors for specific formulae, and would allow for evaluating the completeness of existing symbolic monitoring algorithms against our semantic definitions. Early investigations into monitor construction for LTL used coinductive techniques~\cite{sen2003} and term-rewriting~\cite{geilen2001} to bridge the gap between finite observations and infinite requirements. More recently, Francalanza and Cini~\cite{francalanza2021} introduced a local proof system for explainable monitoring based on dual-sided proof relations ($s \vdash^+ \varphi$ and $s \vdash^- \varphi$), which have a clear semantic analogue in our answer-indexed families. Such symbolic approaches have also been extended to real-time properties in Metric Temporal Logic \cite{lima2023}.

\subsection{Alternative Property Classes}
Havelund {\&} Peled~\cite{Havelund2018} provide a different set of property classes to the conventional hierarchy. They define six classes, \textsf{AFR, AFS, SFR, SFS, NFR}, and  \textsf{NFS}, indicating if the property is \textsf{A}lways, \textsf{S}ometimes, or \textsf{N}ever \textsf{F}initely \textsf{R}efutable or \textsf{S}atisfiable. Safety properties coincide with \textsf{AFR}, guarantee properties with \textsf{AFS}, liveness properties with \textsf{NFR}, and morbidity properties with \textsf{NFS}. Good monitorable properties are $\mathsf{SFS} \cup \mathsf{AFS}$, and bad monitorable properties are $\mathsf{SFR} \cup \mathsf{AFR}$. The classes of weak monitorable properties and of strong monitorable properties are both Boolean combinations of other classes, so they can also be expressed in terms of Havelund {\&} Peled's framework similarly. BLS monitorability does not cleanly map into Havelund {\&} Peled's framework, and indeed they show that several of their classes contain both BLS monitorable and BLS unmonitorable properties. Havelund {\&} Peled do not characterise any monitorability class other than BLS monitorability.

\subsection{Revocable Semantics}

As previously mentioned, all \emph{obligation} properties~\cite{safetyprogress0} are BLS monitorable. If we move further up the safety-progress hierarchy~\cite{safetyprogress0,safetyprogress}, however, we quickly encounter completely unmonitorable properties, such as our \emph{request/acknowledge} format example $\Always (r \Rightarrow \Eventually a)$. 

Bauer et al.~\cite{bauergbu, bauercomparing} propose relaxing the irrevocability requirement of {\ltlthree} by giving a dialect of LTL, called RV-LTL, where answers may be changed as the observed prefix is extended with more states. This allows answers to be given even for unmonitorable properties, although with no guarantees of soundness for presumptive answers. More precisely, RV-LTL is an ad-hoc layering of \ltlthree on top of Pnueli's LTL for finite traces (here notated $\models_\textsf{F}$). Where \ltlthree would give the $\unknown$ answer, RV-LTL instead gives a \emph{presumptive} answer ($\top^\textsf{p}$ or $\bot^\textsf{p}$ ) based on the answer obtained from Pnueli's finite LTL:
\[
[u \models \varphi]_\textsf{RV} = \begin{cases}
 	\top & \text{if}\ [ u \models \varphi]_3 = \top \\
 	\bot & \text{if}\ [ u \models \varphi]_3 = \bot \\
 	\top^\textsf{p} & \text{if}\ [ u \models \varphi]_3 = \texttt{?}\ \text{and}\ u \models_\textsf{F} \varphi \\
 	\bot^\textsf{p} & \text{if}\ [ u \models \varphi]_3 = \texttt{?}\ \text{and}\ u \nmodels_\textsf{F} \varphi \\
 \end{cases}
\]
Intuitively, after a finite prefix $u$, a definitive answer ($\top$ or $\bot$) is unchangeable no matter how the prefix is extended, whereas a presumptive answer  ($\top^\textsf{p}$ or $\bot^\textsf{p}$ ) only applies if execution is stopped at that point. 

If the property in question is a safety property, then the only presumptive answer possible is $\top^\textsf{p}$, and likewise for guarantee properties and $\bot^\textsf{p}$. This means that for properties at the bottom of the safety-progress hierarchy~\cite{safetyprogress0}, \ltlthree is sufficient, as the single $\unknown$ answer can be interpreted as $\top^\textsf{p}$ or $\bot^\textsf{p}$ respectively. However, as noted by Bauer et al.~\cite{bauercomparing}, for an obligation property such as Counterexample~\ref{cte:psi}, which is neither a safety nor guarantee property, both $\top^\textsf{p}$ and $\bot^\textsf{p}$ answers are possible.

Like \ltlthree previously, the semantics of RV-LTL is presented only in terms of other logics. We believe that an inductive semantics can be designed along similar principles to that of \ltlthree given in the present paper, where our answer indexed-families instead produce four sets, two of which are definitive, rather than the two definitive sets we provide for \ltlthree.

As noted by O'Connor {\&} Wickstr\"om~\cite{quickstrom}, Pnueli's finite LTL is a logic of finite \emph{completed} traces, so the decision to judge partial traces as completed for the purpose of giving presumptive answers in RV-LTL is ad-hoc and can produce seemingly arbitrary answers for properties higher in the safety-progress hierarchy.  
For example, consider a system where a flashing light consistently alternates between \textsf{On} and \textsf{Off} states:
$$ \textsf{On}\ \textsf{Off}\ \textsf{On}\ \textsf{Off}\ \cdots $$
A simple property that we might wish to monitor for this system is that the light is $\textsf{On}$ infinitely often:
$$ \Always \Eventually \textsf{On} $$%
As this formula nests $\Always$ and $\Eventually$ operators, it is definitive in neither positive nor negative cases and will only give presumptive answers.  But the
presumptive answer given in RV-LTL depends only on the very last observed status of the light. For a trace where the light continuously alternates off and on, as above, we might intuitively say that presumptive answer ought to be true, but this formula would be considered presumptively false if our observation happens to end in a state where the light is off. Thus, the truth value obtained for this formula is rather sensitive to the point at which our finite observation ceases. 

\subsection{Branching Time Monitoring}
The vast majority of research on monitoring and run-time verification concerns \LT semantics, but we can also consider monitoring for branching-time properties. Aceto et al.~\cite{aceto2019adventures} compare monitors for \LT and branching-time semantics. Francalanza et al.~\cite{Francalanza20172,francalanza2017} provide an extensive investigation into branching time monitoring. Manolios {\&} Trefler~\cite{TreflerManolios2001} and Bouajjani et al.~\cite{Bouajjani1991}  examine safety and liveness properties analogously to Alpern {\&} Schneider~\cite{alpernschneider} but in a branching-time setting. Branching-time analogues of our monitorability classes are future work.

\section{Conclusion}
We have presented a comprehensive semantic foundation for run-time verification with linear-time infinite trace semantics. By identifying those prefixes that are definitive for a property, we have given an inductive, model-based semantic accounting of \ltlthree in terms of answer-indexed families of definitive sets, and in the process shown that \ltlthree is more accurately described as a more detailed presentation of conventional LTL, rather than a distinct logic in its own right. We have formalised the popular formula progression technique used in runtime verification and testing scenarios, and proved it sound and complete with respect to our semantics. Using definitive prefixes as a basis, we have elegantly characterised various classes of properties and their monitorability characteristics in a distinctly topological flavour, in the vein of Alpern {\&} Schneider~\cite{alpernschneider}. All of our work has been mechanised in over 2900 lines of Isabelle/HOL proof script.

Our theory of definitive prefixes might provide a semantic foundation for other logics of partial traces, such as the LTL$^\pm$ of Eisner et al.~\cite{ltlpm}, QuickLTL from O'Connor {\&} Wickstr\"om~\cite{quickstrom}, or the aforementioned RV-LTL~\cite{bauercomparing}. Our answer-indexed families may also be applicable to other multi-valued logics. Examples include rLTL~\cite{rltl} and the five-valued logic of Chai et al.~\cite{chai}. Our monitorability characterisations are likely to have analogues in other semantic contexts such as when monitoring for branching-time properties or hyperproperties. 

\nocite{*}
\bibliographystyle{elsarticle-num-names}
\bibliography{cites}
\end{document}